\newtheorem{definition}{Definition}
\newtheorem{lemma}{Lemma}
\newtheorem{theorem}{Theorem}
\begin{document}

\title{Power of Uninitialized Qubits in Shallow Quantum Circuits}

\author{Yasuhiro Takahashi and Seiichiro Tani\\
NTT Communication Science Laboratories, NTT Corporation\\
{\small\tt \{takahashi.yasuhiro,tani.seiichiro\}@lab.ntt.co.jp}}

\date{}

\maketitle

\begin{abstract}
   We study the computational power of shallow quantum circuits with $O(\log n)$ initialized 
   and $n^{O(1)}$ uninitialized ancillary qubits, where $n$ is the input length and the initial 
   state of the uninitialized ancillary qubits is arbitrary. First, we show that such a 
   circuit can compute any symmetric function on $n$ bits that is classically computable 
   in polynomial time. Then, we regard such a circuit as an oracle 
   and show that a polynomial-time classical algorithm with the oracle can 
   estimate the elements of any unitary matrix corresponding to a constant-depth quantum 
   circuit on $n$ qubits. Since it seems unlikely that these tasks can be done with only 
   $O(\log n)$ initialized ancillary qubits, our results give evidences that adding uninitialized 
   ancillary qubits increases the computational power of shallow quantum circuits with only 
   $O(\log n)$ initialized ancillary qubits. Lastly, to understand the limitations of 
   uninitialized ancillary qubits, we focus on near-logarithmic-depth quantum circuits with them 
   and show the impossibility of computing the parity function on $n$ bits.
   \end{abstract}

\section{Introduction}\label{intro}

\subsection{Background and Main Results}

Much attention has been paid to the computational power of shallow (i.e., polylogarithmic-depth) 
quantum 
circuits~\cite{Cleve,Moore,Green,Fenner,Hoyer,Fang,Bera2,Takahashi-sim,Takahashi-cc,Bravyi}. 
A major purpose of this line of research is to understand the differences 
between shallow quantum and classical circuits. In addition, it is strongly motivated by one 
of the most difficult problems concerning quantum circuit implementation: in current and 
near-future technologies, it would be very difficult to keep quantum coherence 
for a period of time long enough to apply many gates.

In discussing the computational power of shallow quantum circuits, polynomially many 
ancillary qubits initialized to, say, $|0\rangle$ are assumed to be available. The initialized 
ancillary qubits are particularly important for quantum circuits since many quantum operations 
require ancillary qubits to preserve their unitary property and store intermediate results. 
Another implementation problem arises here: it is difficult to prepare a large number of 
qubits that are simultaneously initialized to a certain state. Indeed, this problem has often been 
addressed in the literature~\cite{DiVincenzo,Ladd}. However, most papers concerning 
the problem assume a sufficiently long coherence time. In this paper, we address 
these two problems simultaneously.

A straightforward quantum computation model reflecting a short coherence time and a 
limited number of initialized ancillary qubits would be shallow quantum circuits with $O(\log n)$ 
initialized ancillary qubits, where $n$ is the input length. However, their computational power 
seems quite low since each step of them can utilize only a small number 
of intermediate results. In fact, it is not even known whether such a circuit 
can compute the OR function on $n$ bits, and it seems unlikely that it can. Therefore, 
it is highly desirable to find additional ancillary 
qubits satisfying the following conditions: they should be easier to prepare than 
initialized ancillary qubits and increase the computational power of shallow quantum 
circuits~with only $O(\log n)$ initialized ancillary qubits. An interesting direction is to 
study qubits in the completely mixed state~\cite{Knill}, but it would be better not to assume 
any particular initial state.

We consider polynomially many uninitialized qubits as additional ancillary qubits. 
More concretely, we study shallow quantum circuits with $O(\log n)$ initialized 
and $n^{O(1)}$ uninitialized ancillary qubits, where we assume that no intermediate 
measurements are allowed. The initial state of the uninitialized ancillary qubits is arbitrary and 
thus they are easier to prepare than initialized ancillary qubits, i.e., they 
satisfy the above first condition on additional ancillary qubits. But do they satisfy 
the second condition? Specifically, are shallow quantum circuits with $O(\log n)$ initialized 
and $n^{O(1)}$ uninitialized ancillary qubits more powerful than those without uninitialized 
ancillary qubits? Although uninitialized ancillary qubits are known to be useful for constructing a 
few efficient quantum circuits~\cite{Barenco,Takahashi-factor}, 
a complexity-theoretic analysis of quantum circuits with such ancillary qubits has not 
yet been~done. 

First, to give evidence of an affirmative answer to the question, we consider 
symmetric functions, 
which are Boolean functions whose output depends only on the number of ones in the 
input bits~\cite{Jukna}. Let ${\cal S}_n$ be the class of symmetric functions on $n$ bits 
that are classically computable in polynomial time. For example, ${\cal S}_n$ includes 
the OR function, for which it is not known whether there exists a shallow 
quantum circuit (consisting of one-qubit gates and CNOT gates) with only $O(\log n)$ 
initialized ancillary qubits, and it seems unlikely that it does. However, any 
function in ${\cal S}_n$ can be computed by adding uninitialized ancillary qubits:
\begin{theorem}\label{can}
Any $f_n \in {\cal S}_n$ can be computed by an $O((\log n)^2)$-depth quantum circuit 
with $n$ input qubits, one output qubit, and $O(\log n)$ initialized and $O(n(\log n)^2)$ 
uninitialized~ancillary qubits such that it consists of the gates in the gate set $\cal G$, 
where $\cal G$ consists of a Hadamard gate, a phase-shift gate with 
angle $2\pi c/2^t$ for any integers $t \geq 1$ and $c$, and a CNOT gate.
\end{theorem}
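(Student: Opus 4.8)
The plan is to reduce to computing the Hamming weight $|x|:=\sum_i x_i$ and then reading $f_n$ off it, using the uninitialized ancillas to parallelise the construction. Since $f_n$ is symmetric and polynomial-time computable, there is a polynomial-time $g_n\colon\{0,\dots,n\}\to\{0,1\}$ with $f_n(x)=g_n(|x|)$; set $a:=\lceil\log(n+1)\rceil=O(\log n)$. The core is a reversible subroutine that writes $|x|$ into an $a$-qubit register $R$ taken from the initialized ancillas, i.e.\ carries $|0\rangle^{\otimes a}_R$ to the basis state encoding $|x|$; after running it I would set the output qubit to $g_n(|x|)$ by a second shallow subroutine and then uncompute the core subroutine, so that $R$ and every ancilla are restored and only the output qubit has changed.

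For the core subroutine I would use quantum-Fourier-transform counting: apply the $a$-qubit QFT to $R$ to get $2^{-a/2}\sum_{y}|y\rangle$, then the diagonal map $\Phi_x\colon|y\rangle\mapsto e^{2\pi i|x|y/2^a}|y\rangle$, then the inverse QFT; as $0\le|x|\le n<2^a$, the register ends holding $|x|$ exactly. The two QFTs use only Hadamards and controlled-phase gates of dyadic angle --- all in $\mathcal{G}$ --- on $O(\log n)$ qubits, so each has depth $O(\log n)$. The real difficulty is $\Phi_x$: expanded it is $\prod_{j=1}^{n}\prod_{k=0}^{a-1}CP\!\bigl(2\pi 2^{k}/2^{a};x_j,R_k\bigr)$, and for each register qubit $R_k$ this contains $n$ mutually commuting controlled-phase gates sharing the target $R_k$, hence naive depth $n$. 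To make it shallow I would view $R$ as a $\mathbb{Z}_{2^a}$-register and ``fan it out'' --- not by copying bits, but by \emph{reversible addition} modulo $2^a$ --- into $n$ uninitialized $a$-qubit registers $D_1,\dots,D_n$ laid out as a depth-$O(\log n)$ binary tree of adders, so that $D_j=R+\gamma_j$ where $\gamma_j$ depends only on the (unknown, possibly entangled) initial contents of the ancillas; then, in parallel over $j$ and controlled by $x_j$, apply to $D_j$ the diagonal gate $|v\rangle\mapsto e^{2\pi i v/2^a}|v\rangle$ (that is $a$ dyadic controlled-phase gates, in $\mathcal{G}$); then un-fan-out. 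Because the fan-out is additive, on the branch $|y\rangle_R$ the total phase collected is $e^{2\pi i|x|y/2^a}\cdot e^{2\pi i(\sum_j x_j\gamma_j)/2^a}$: the first factor is exactly $\Phi_x$, and the second is a garbage-dependent phase that does \emph{not} depend on $y$. A ``mirror'' pass --- the identical construction but with $R$ replaced as the source by one clean $|0\rangle$ ancilla (so that $D_j=\gamma_j$) and the diagonals' sign flipped --- applies precisely the inverse of that garbage phase and returns every $D_j$ to its initial state. This is in the spirit of the dirty-ancilla gadgets of~\cite{Barenco}, but it is the additive rather than XOR-based fan-out that kills the cross terms. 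Depth: $O(\log n)$ tree levels times $O(\log n)$ per adder, i.e.\ $O((\log n)^2)$; uninitialized ancillas: $O(n)$ registers of $O(\log n)$ qubits at each of $O(\log n)$ tree levels, i.e.\ $O(n(\log n)^2)$.

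With $R$ holding $|x|$, I would set the output qubit $o$ (initialized to $|0\rangle$) to $g_n(|x|)$ as follows. The truth table of $g_n$ is polynomial-size and hard-wired and $R$ has only $O(\log n)$ qubits, so $g_n(|x|)=\bigoplus_{w:\,g_n(w)=1}[R=w]$, the events $[R=w]$ being disjoint. I would compute each indicator $[R=w]$ into its own uninitialized ancilla by a Toffoli-network equality test (a dirty-ancilla multiply-controlled-$\mathrm{NOT}$), XOR all these ancillas into $o$ through a depth-$O(\log n)$ XOR-tree of uninitialized ancillas, and then run a mirror pass cancelling the garbage exactly as before --- depth $O(\log n)$, $O(n\log n)$ uninitialized ancillas. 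Finally I would uncompute the counting subroutine to restore $R$ to $|0\rangle^{\otimes a}$ (the uninitialized ancillas already having been restored in place). Throughout, Hadamard and CNOT lie in $\mathcal{G}$, every phase used has dyadic angle and hence lies in $\mathcal{G}$, and a Toffoli gate decomposes exactly into Hadamards, the $2\pi/8$ phase gate and CNOTs, so rewriting everything over $\mathcal{G}$ costs only a constant factor in depth. Adding up, the circuit has depth $O((\log n)^2)$, uses $a+O(1)=O(\log n)$ initialized ancillas, and $O(n(\log n)^2)$ uninitialized ancillas.

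The main obstacle is exactly the parallelisation of $\Phi_x$: realising it in depth $O((\log n)^2)$ while the $\Theta(n\log n)$ work qubits start in an arbitrary, possibly entangled state that must be restored exactly. The point of the additive-fan-out-plus-mirror construction is to force the unknown initial data to contribute only a clean, $y$-independent multiplicative phase, which can then be cancelled; verifying that the adder tree, the controlled diagonals, the mirror pass and the uncomputations compose correctly and fit the depth and ancilla budgets is where the technical work lies. The QFT on $O(\log n)$ qubits, the truth-table read-off of $g_n$, the rewriting over $\mathcal{G}$ and the final uncomputation are routine by comparison.
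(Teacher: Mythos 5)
Your first stage --- writing $|x|$ into the clean $O(\log n)$-qubit register by QFT counting, and parallelising the diagonal $\Phi_x$ by an \emph{additive} (mod $2^a$) fan-out into dirty registers so that the unknown offsets $\gamma_j$ contribute only the $y$-independent phase $e^{2\pi i\sum_j x_j\gamma_j/2^a}$, which you then cancel with a mirror pass from a zero source --- is sound and is a genuinely different mechanism from the paper's. The paper also targets the Fourier-domain states $(|+\rangle+e^{2\pi i|x|/2^k}|-\rangle)/\sqrt{2}$, but it uses XOR fan-out of individual bits and then runs $k$ successive correction stages in which the garbage term is multiplied by increasing powers of $2$ until it becomes a multiple of $2^k$ and disappears modulo $2\pi$; your explicit mirror cancellation replaces that telescoping argument. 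Both give depth $O((\log n)^2)$ and fit the ancilla budget.

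The gap is in your second stage, which you dismiss as routine. You propose to read $g_n(|x|)$ off $R$ as $\bigoplus_{w:\,g_n(w)=1}[R=w]$, computing the $\Theta(n)$ indicators ``in parallel'' by multiply-controlled NOTs. But all of these gates have their controls on the \emph{same} $a=O(\log n)$ qubits of $R$, and gates sharing a qubit cannot occupy the same layer, so applied sequentially the depth is $\Theta(n\log n)$, not $O(\log n)$. To parallelise you would have to fan $R$ out into $\Theta(n)$ copies; with uninitialized ancillas an XOR fan-out produces corrupted copies $R\oplus g_j$, an equality test on such a copy computes $[R=w\oplus g_j]$ --- the indicator of the wrong $w$ --- and because the indicator is a \emph{nonlinear} function of $R$, the run-it-twice/mirror trick that rescues your XOR tree and your phase gadget does not cancel this corruption: the two passes yield $[R=w\oplus g_j]\oplus[R=w\oplus g_j]=0$ rather than $[R=w]$. (An additive fan-out does not help either, since the comparison value $w+\gamma_j$ is unknown.) This is exactly where the paper needs its second idea: it expands $g_m$ in the $\mathbb{Z}_2^m$ Fourier basis, $g_m(s)=g_m(0^m)+\frac{2}{2^m}\sum_t c_t\bigoplus_k t_ks_k$, so the readoff reduces to parities (linear in $s$, hence surviving dirty XOR fan-out by applying the parity circuit twice) followed by a phase accumulation of the same character-based kind as in your first stage. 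Substituting that expansion for your truth-table readoff would close the gap; as written, the second stage fails the depth bound.
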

\noindent
Theorem~\ref{can} gives evidence that shallow quantum circuits with 
$O(\log n)$ initialized and $n^{O(1)}$ uninitialized ancillary qubits are more powerful than 
those without uninitialized ancillary qubits in terms of computing symmetric functions. 
The proof of Theorem~\ref{can} immediately 
implies that the depth of the circuit can be decreased to $O(\log n)$ when the circuit is 
allowed to further include unbounded fan-out gates and unbounded Toffoli gates.

Then, to give further evidence of the computational advantage of using uninitialized ancillary 
qubits, we consider a classical algorithm with an oracle that can perform a shallow quantum 
circuit with them. When the oracle receives a bit string $w$, it performs the circuit with 
input qubits initialized to $|w\rangle$ and sends back the classical 
outcome of the measurement on the output qubit. Let $p(n)$ be a polynomial and $C_n$ be a 
constant-depth quantum circuit on $n$ qubits consisting of the gates in $\cal G$. 
The problem, denoted by MAT$(p(n),C_n)$, is to compute a real number $\alpha_x$ such 
that $|\alpha_x - |\langle0^n|C_n|x\rangle|^2| \leq 1/p(n)$ for 
any input $x \in \{0,1\}^n$, where $C_n$ also denotes its matrix representation. It is not 
known whether the problem has a 
polynomial-time classical algorithm, and it seems unlikely that it does~\cite{Ni}, 
even when we use an oracle that can perform a 
shallow quantum circuit with only $O(\log n)$ initialized ancillary qubits. However, 
the problem can be solved by adding uninitialized ancillary qubits:

\begin{theorem}\label{oracle}
For any polynomial $p(n)$ and a constant-depth quantum circuit $C_n$ on $n$ qubits 
consisting of the gates in $\cal G$, {\rm MAT}$(p(n),C_n)$ can be 
solved with probability exponentially (in $n$) close to 1 by a polynomial-time probabilistic 
classical algorithm with an oracle that can perform an $O(\log n)$-depth quantum circuit with 
$2n$ input qubits, one output qubit, and (no initialized and) $n$ uninitialized ancillary qubits 
such that it consists of the gates in $\cal G$.
\end{theorem}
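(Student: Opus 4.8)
The plan is to exploit the elementary identity $S_x:=|\langle 0^n|C_n|x\rangle|^2=\Pr[\text{a computational-basis measurement of }C_n|x\rangle\text{ yields }0^n]$, and to realize this probability directly as the acceptance probability of a single fixed $O(\log n)$-depth oracle circuit. Given an input $x$, the classical algorithm queries the oracle with the $2n$-bit string $w=x\,0^n$, so the oracle circuit starts with its first $n$ input qubits (register $R_1$) in $|x\rangle$, its next $n$ input qubits (register $R_2$) in $|0^n\rangle$, its output qubit in $|0\rangle$, and its $n$ uninitialized ancillary qubits in an arbitrary unknown state that the circuit will never touch. The circuit then applies $C_n$ to $R_1$ and coherently writes the single bit ``$R_1$ currently equals the all-zero string'' onto the output qubit; measuring that qubit returns $1$ with probability exactly $S_x$. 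Repeating the query polynomially often and taking an empirical mean, followed by a Hoeffding bound, gives the required estimate $\alpha_x$.

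Concretely, after applying $C_n$ to $R_1$ (depth $O(1)$, gates in $\cal G$ by hypothesis), the circuit (i) applies $X$ to every qubit of $R_1$, where $X=HZH$ and $Z$ is the phase shift by $2\pi/2^1$, so $X$ is a depth-$3$ circuit over $\cal G$; (ii) computes the AND of the $n$ qubits of $R_1$ by a balanced binary tree of Toffoli gates, routing the at most $n-1$ intermediate results into the $n$ qubits of $R_2$; and (iii) CNOTs the root of the tree into the output qubit. Each Toffoli is a constant-depth circuit over $\{H,T,T^\dagger,\mathrm{CNOT}\}\subseteq\cal G$ (note $T,T^\dagger$ are phase shifts by $\pm 2\pi/2^3$), and within each of the $O(\log n)$ levels of the tree the Toffolis act on disjoint triples of qubits, so the whole circuit has depth $O(\log n)$ over $\cal G$ and uses $2n$ input qubits, one output qubit and $n$ (untouched) uninitialized ancillary qubits. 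Steps (i)–(ii) implement a reversible permutation of the basis states of $R_1\otimes R_2$ that places $[m=0^n]$ on a designated qubit when $R_1$ holds $|m\rangle$, so after step (iii) the output qubit carries $[m=0^n]$; hence measuring it returns $1$ with probability $\sum_m|\langle m|C_n|x\rangle|^2\,[m=0^n]=S_x$, regardless of the junk left in $R_2$ and (trivially) regardless of the uninitialized ancillas.

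The classical algorithm then issues $K=\Theta(n\,p(n)^2)$ identical queries with $w=x\,0^n$ and outputs the empirical mean $\alpha_x$ of the returned bits. Because the oracle circuit is oblivious to the uninitialized ancillas, the $K$ outcomes are genuinely i.i.d.\ Bernoulli$(S_x)$ — no adversarial correlation across queries can creep in — so Hoeffding's inequality gives $|\alpha_x-S_x|\le 1/p(n)$ except with probability $2e^{-2K/p(n)^2}=2^{-\Omega(n)}$, and the algorithm plainly runs in polynomial time. The one genuinely delicate point is step (ii): checking ``$R_1=0^n$'' in logarithmic depth forces a fan-in tree with $\Theta(n)$ scratch qubits, and these must be \emph{clean}, since writing partial AND-results into the uninitialized ancillary qubits would leave their unknown contents XOR-ed into the output bit and destroy the estimate; drawing the scratch from the second half of the input register (legitimately set to $0$ by the querying algorithm) is what circumvents this. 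Everything else — the constant-depth Toffoli decomposition, verifying membership in $\cal G$, and the concentration bound — is routine bookkeeping.
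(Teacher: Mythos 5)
Your circuit is internally consistent and, read against the bare words of the theorem, it does establish them: apply $C_n$ to the first register, complement it, fold a balanced Toffoli tree whose $n-1$ intermediate wires live in the second register, and the output qubit is $1$ with probability exactly $|\langle 0^n|C_n|x\rangle|^2$; Hoeffding finishes the job. But the whole proof rests on the one move the paper explicitly singles out and tries to forbid: you have the classical algorithm query with $w=x\circ 0^n$, so that $n$ of the $2n$ input qubits arrive \emph{initialized to $|0\rangle$} and serve as clean scratch space. The remark following Theorem~\ref{oracle} says the input register is capped at $2n$ qubits precisely ``to avoid'' the circuit harvesting initialized ancillary qubits from the input; your construction shows the cap fails to do so, since a log-depth AND tree needs only $n-1$ clean qubits. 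Worse, your oracle circuit never touches the $n$ uninitialized ancillary qubits at all, so your argument would prove the theorem verbatim for an oracle with \emph{no} ancillary qubits whatsoever. That makes the theorem, as you prove it, incapable of supporting the conclusion the paper draws immediately afterwards --- that ``by the proof of Theorem~\ref{oracle}'' there is evidence of a distribution generable with uninitialized ancillas but not without them. No individual step of yours is false, but the approach defeats the theorem's purpose; it is really a demonstration that the $2n$-input-qubit restriction does not encode the intended constraint, not a proof in the spirit of the result.

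The paper's own proof is entirely different and uses the uninitialized ancillas essentially. Following the algorithm of~\cite{Ni}, it reduces MAT$(p(n),C_n)$ to estimating $F_n(x,w)=\langle x|C_n^\dagger(\bigotimes_{j=1}^n Z_j^{w_j})C_n|x\rangle$ for uniformly random $w$, so both halves of the $2n$-bit query carry genuine, non-constant input and there are no spare clean qubits to steal. Each $F_n(x,w)$ is estimated by a parallelized Hadamard test $E_{2n}$: a fan-out copies the control qubit's state onto the $n$ uninitialized ancillas $G(j)$, the controlled phases are applied in parallel with $G(j)$ and $W_j$ as controls, and the unknown initial bits $a_j$ of the ancillas cancel from the acceptance probability because $Z^{a_jw_j}+Z^{(a_j\oplus 1)w_j}$ always combines to $I+Z^{w_j}$. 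That cancellation --- making uninitialized qubits usable as the targets of the parallelizing fan-out --- is the idea your proposal is missing.
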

\noindent
As with Theorem~\ref{can}, Theorem~\ref{oracle} gives evidence that shallow quantum 
circuits with $O(\log n)$ initialized and $n^{O(1)}$ uninitialized ancillary qubits are more 
powerful than those without uninitialized ancillary qubits. More concretely, by the proof 
of Theorem~\ref{oracle}, this is evidence that there exists a probability distribution 
on $\{0,1\}$ that can be generated with uninitialized ancillary qubits but cannot without them. 
This is because, otherwise, {\rm MAT}$(p(n),C_n)$ would be solved by using an oracle 
with only $O(\log n)$ initialized ancillary qubits. We give a brief comment on the number of 
input qubits in the circuit performed by the oracle. If the number is large, a classical algorithm 
can send $0^k$ for large $k$ (besides another bit string) to the oracle and the circuit 
can use a part of the input qubits as a large number of initialized ancillary qubits. 
To avoid this, we restrict the number of input qubits to $2n$.

Lastly, to understand the limitations of uninitialized ancillary qubits, for an arbitrary constant 
$0 \leq \delta <1$, we focus on $O((\log n)^{\delta})$-depth quantum circuits with them and 
consider the computability of the parity function on $n$ bits. Since the depth is $o(\log n)$, 
it is easy to show that the parity function cannot be computed by any such circuit consisting of 
the gates in $\cal G$. This is also the case even when the circuit includes additional gates on a 
non-constant number of qubits:

\begin{theorem}\label{cannot}
Let $0 \leq \delta <1$ be an arbitrary constant. Then, the parity 
function on $n$ bits cannot be computed by any $O((\log n)^{\delta})$-depth 
quantum circuit with $n$ input qubits, one output qubit, and $O(\log n)$ initialized 
and $n^{O(1)}$ uninitialized ancillary qubits such that it consists of the gates in $\cal G$, 
unbounded fan-out gates on $(\log n)^{O(1)}$ qubits, and unbounded Toffoli gates.
\end{theorem}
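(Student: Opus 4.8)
The plan is to argue by contradiction. Suppose some $O((\log n)^{\delta})$-depth circuit of the stated form computes the parity function on $n$ bits; it must then do so for every initial state $|\psi\rangle$ of the $n^{O(1)}$ uninitialized ancillary qubits, a fact the argument will use. It suffices to produce an input $x$, an index $i$, and a state $|\psi\rangle$ for which the measurement on the output qubit has the same distribution on input $x$ as on the input obtained from $x$ by flipping its $i$-th bit: since parity takes different values on these two inputs, such a triple is the sought contradiction. So the whole proof amounts to showing that the output qubit is ``blind'' to some input bit.

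Step 1 handles circuits that use only gates from $\cal G$ and fan-out gates, and is essentially the easy claim. Fix $|\psi\rangle$ arbitrarily and track the backward causal cone of the output qubit. Every gate of $\cal G$ or fan-out gate acts on $(\log n)^{O(1)}$ qubits, so passing backwards through one layer multiplies the number of qubits that can influence the output by at most $(\log n)^{O(1)}$; after $O((\log n)^{\delta})$ layers the cone has size at most $((\log n)^{O(1)})^{O((\log n)^{\delta})} = 2^{O((\log n)^{\delta}\log\log n)}$, which is $n^{o(1)}$ because $\delta<1$ forces $(\log n)^{\delta}\log\log n = o(\log n)$. Hence some input qubit lies outside the cone; flipping it changes neither the reduced state of the output qubit nor its measurement distribution, giving the contradiction. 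This is exactly the case with no unbounded Toffoli gates.

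Step 2 is the obstacle: an unbounded Toffoli gate destroys the causal-cone bound, because a single $C^{k}X$ whose target lies on a path to the output drags all $k = n^{O(1)}$ of its control qubits into the cone, and a handful of such gates can bring in all $n$ input qubits. The plan for dealing with them is to exploit that $C^{k}X$ is the identity except on the subspace where all $k$ controls are in state $|1\rangle$ --- equivalently, in the control basis it is block-diagonal, each block being $I$ or a single bit-flip --- and to apply a random restriction to the $n$ input qubits: keep each input qubit free with probability $p$, otherwise pin it to a uniformly random bit. A switching-lemma-style, layer-by-layer analysis should then show that, with positive probability over the restriction, every wide Toffoli collapses to a gate on only $(\log n)^{O(1)}$ qubits while $\Omega(pn)$ input qubits survive; the restricted circuit still computes the parity of the surviving inputs up to a fixed global sign, and, having only $(\log n)^{O(1)}$-qubit gates, it falls under Step 1, which finishes the proof.

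The hard part is making this restriction argument legitimate in the quantum setting: unlike in $\mathrm{AC}^{0}$, the controls of a Toffoli gate can be held in superposition, so pinning input bits does not automatically simplify wide gates --- this is precisely why the $\mathrm{QAC}^{0}$-versus-parity question is delicate. I would attempt to push the restriction through each control-basis block of each Toffoli layer separately, exploiting the block-diagonal structure of the Toffoli gates, and to use the cap of only $O(\log n)$ \emph{initialized} ancillary qubits --- as opposed to the $n^{O(1)}$ uninitialized ones, whose arbitrary initial state must be carried through the entire analysis --- as the feature that makes the collapse provable; keeping track of these two kinds of ancillary qubits under the reduction is where I expect the real work to lie. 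Should a self-contained switching argument prove too fragile, the fallback is to reduce to existing lower bounds against constant-depth quantum circuits that have unbounded Toffoli gates but only a small number of initialized ancillary qubits, once again using the logarithmic bound on initialized ancillary qubits in an essential way.
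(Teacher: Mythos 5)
Your Step 1 is correct and is exactly the paper's treatment of the easy case: when every gate acts on $(\log n)^{O(1)}$ qubits, the backward light cone of the output qubit has size at most $((\log n)^{O(1)})^{O((\log n)^{\delta})}=n^{o(1)}$, so some input qubit cannot influence the output, contradicting the sensitivity of parity. The genuine gap is Step 2. A random-restriction/switching argument is not known to collapse wide Toffoli gates in a unitary circuit; as you yourself observe, the controls can be held in superposition and, worse, need not be input qubits at all (they can be ancillary or intermediate wires), so pinning input bits does not make a wide gate narrow. Your fallback of ``reducing to existing lower bounds'' is not an argument. As written, the central case is unresolved.

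The paper closes this case without restrictions. Conjugating each unbounded Toffoli by Hadamards turns it into an unbounded $Z$ gate $T_l$ on $t_l$ qubits, which is the identity except on the $|1^{t_l}\rangle$ component of those qubits. Writing $V_l$ for the circuit up to $T_l$, one gets $||T_lV_l|x\circ y\circ b\rangle-V_l|x\circ y\circ b\rangle||^2=4\sum_{j}|g^{(l)}_{x\circ y\circ b}(1^{t_l}\circ j)|^2$, and summing this over \emph{all} computational basis inputs $x\circ y\circ b$ gives exactly $4\cdot 2^{u_l}$ by unitarity of $V_l$ (each of the $2^{u_l}$ ``all-ones'' rows carries total squared amplitude $1$ across all columns). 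Hence the average over a uniform input $x$ is at most $2^{p+q+3}/2^{t_l}$, and a further average over the $2^q$ basis states of the uninitialized ancillas yields a single string $a$ for which the sum over all $k$ wide gates is at most $k2^{p+3}/2^{t_{\min}}$. With the width threshold $t=(c+4)\log(n+p+q+1)$ and $p\leq c\log n$, a triangle-inequality plus Markov bound shows that \emph{deleting} every wide gate perturbs the output state by less than $0.1$ for more than half of the inputs when the ancillas start in $|0^p\rangle|a\rangle$. Because the circuit must compute parity for \emph{every} initial ancilla state---the one place that universal quantifier is used essentially---the truncated circuit computes parity with probability at least $0.99$ on more than $2^{n-1}$ inputs, while your Step 1 caps any all-narrow circuit at $2^{n-1}$ inputs. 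This amplitude-counting and averaging-over-ancilla-strings step is the idea your proposal is missing; it is also where the $O(\log n)$ bound on the initialized ancillas (surviving as the factor $2^{p+3}$ after the $2^q$ is cancelled by the choice of $a$) actually does its work.
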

\noindent
Theorem~\ref{cannot} means that $O((\log n)^{\delta})$-depth quantum circuits 
with $O(\log n)$ initialized and $n^{O(1)}$ uninitialized ancillary qubits are {\em not} more 
powerful than those without uninitialized ancillary qubits 
in terms of computing the parity function, even when they include the two types of gates on 
a non-constant number of qubits. Moreover, Theorem~\ref{cannot} 
implies that the circuit in Theorem~\ref{can} is optimal in the following sense. 
As described in the paragraph following Theorem~\ref{can}, the depth of the circuit 
becomes $O(\log n)$ when the circuit uses 
the gates in $\cal G$, unbounded fan-out gates, and unbounded Toffoli gates. As described 
in Section~\ref{tech}, the circuit is based on the computation of the number of ones in 
the input bits and thus can be regarded as a parity circuit. Thus, the circuit cannot 
be significantly improved simultaneously in terms of both the depth and the number of qubits 
on which unbounded fan-out gates act. This is because, otherwise, we would obtain a parity 
circuit that contradicts Theorem~\ref{cannot}.

\subsection{Imposing the Quantum Catalytic Requirement}

Buhrman et al.~\cite{Buhrman} defined a {\em classical} computation with a logarithmic-size 
clean space and a polynomial-size additional space, which they call a catalytic log-space 
computation. The initial state of the additional space 
is arbitrary, and they impose the catalytic requirement that its state has to be returned to 
the initial one at the end of the computation. They showed a surprising result: it appears that 
such a computation is more powerful than that without the additional space. 
The additional space seems like a catalyst in a chemical reaction.

The corresponding catalytic requirement in our quantum setting is that 
the state of uninitialized ancillary qubits has to be returned to the initial one at the end of 
computation. Since the circuit in Theorem~\ref{can} has no error, by the 
standard technique of uncomputation, it is easy to transform the circuit into the one 
that meets the quantum catalytic requirement without increasing the original asymptotic 
complexity. Thus, Theorem~\ref{can} means that uninitialized ancillary qubits seem like a 
catalyst as in the classical setting~\cite{Buhrman}. When shallow quantum circuits have an 
error, it is not easy to transform them into the ones that meet the quantum catalytic 
requirement and the analysis of such circuits is left for future work.

From a practical point of view, it is even better to decrease the number of uninitialized 
ancillary qubits we need to specially prepare in addition to decreasing the number of 
initialized ones. The quantum catalytic 
requirement allows us to do this in some cases. An example is when we use a shallow quantum 
circuit with uninitialized ancillary qubits in a quantum circuit for Shor's 
factoring algorithm~\cite{Takahashi-factor}. The factoring circuit uses two registers and, 
during some operation, all qubits in one register are idle. Thus, when 
we use a shallow quantum circuit for the operation that meets the above requirement, we 
can regard the idle qubits as uninitialized ancillary qubits since the circuit returns their state 
to the initial one. The use of the circuit in this way requires that the computation has to 
be done with only qubits, which matches our quantum computation model. From a 
complexity-theoretic standpoint, it is also interesting to study a quantum computation 
model with an additional classical space~\cite{Watrous-cc}.

\subsection{Overview of Techniques}\label{tech}

We construct two quantum circuits to obtain the circuit for $f_n \in {\cal S}_n$ 
in Theorem~\ref{can}. The first one is an $O((\log n)^2)$-depth OR reduction circuit 
with $O(n(\log n)^2)$ uninitialized ancillary qubits, 
which reduces the computation of the OR function on $n$ bits to that on $m=O(\log n)$ bits. 
Its first part is a modification of the original OR reduction circuit~\cite{Hoyer} and yields a 
state whose phase depends on the uninitialized ancillary qubits but has a convenient 
form to eliminate the dependency. We apply similar circuits repeatedly 
to add an appropriate phase to that of the state, which eliminates any dependency on 
the uninitialized ancillary qubits. The second circuit is 
an $O(m^{2})$-depth one for $g_m$ with $O(m2^{m})$ uninitialized ancillary qubits. Here, 
$g_m$ is a Boolean function on $m$ bits satisfying that $g_m(s)=f_n(x)$ for any $x\in\{0,1\}^n$, 
where $s\in\{0,1\}^m$ is the binary representation of the number of ones in $x$. The circuit 
is based on the Fourier expansion of $g_m$~\cite{Jukna} and the above method for eliminating 
any dependency on the uninitialized ancillary qubits. For any input $x\in\{0,1\}^n$, we 
first compute $s$ using the OR reduction circuit and then compute $g_m(s)=f_n(x)$ using 
the circuit for $g_m$.

The algorithm in Theorem~\ref{oracle} is based on a polynomial-time probabilistic 
classical algorithm for {\rm MAT}$(p(n),C_n)$ with an oracle~\cite{Ni}, where 
the oracle can perform a commuting quantum circuit for the 
Hadamard test~\cite{Cuevas}. Although initialized ancillary qubits 
can be used to parallelize the Hadamard test~\cite{Takahashi-sim}, it has not been known 
whether uninitialized ancillary qubits are useful for this purpose. We show that they 
can be used like initialized ancillary qubits in parallelizing the Hadamard test. We replace 
the commuting quantum circuit with a new circuit with our parallelizing techniques 
using uninitialized ancillary qubits in the algorithm 
for {\rm MAT}$(p(n),C_n)$, which yields the desired algorithm.

We show Theorem~\ref{cannot} by extending the proof of Bera~\cite{Bera2}. 
Our proof is different from the previous one in that it deals with ancillary qubits 
and unbounded fan-out gates. The key to Theorem~\ref{cannot} is to show that, for any 
quantum circuit $C_n$ with $O(\log n)$ initialized and $n^{O(1)}$ uninitialized ancillary 
qubits such that it may include unbounded Toffoli gates, there exists an initial state of the 
uninitialized ancillary qubits such that $C_n$ with the initial state is well approximated 
by $\tilde C_n$ with the same initial state. Here, $\tilde C_n$ is the circuit obtained from 
$C_n$ by removing unbounded Toffoli gates on a large number of qubits. 
Thus, if $C_n$ is a small-depth quantum circuit for the parity function, then $\tilde C_n$ 
computes the same function with high probability. This is impossible since $\tilde C_n$ does not 
have any gate on a large number of qubits and thus its output does not depend on all input qubits.

\section{Preliminaries}

\subsection{Quantum Circuits and Uninitialized Ancillary Qubits}\label{circuit}

A quantum circuit consists of elementary gates, each of which is in the gate set $\cal G$, 
where $\cal G$ consists of a Hadamard gate $H$, a phase-shift gate 
$Z(\theta)$ with angle $\theta$, and a CNOT gate. Here, 
$H=|+\rangle\langle 0| + |-\rangle\langle 1|$ and 
$Z(\theta)=|0\rangle\langle0|+e^{i\theta}|1\rangle\langle 1|$, 
where $|\pm\rangle=(|0\rangle\pm |1\rangle)/\sqrt{2}$ and $\theta = 2\pi c/2^t$ for 
any integers $t \geq 1$ and $c$. We write $Z(\pi)$ and $HZ(\pi)H$ as $Z$ and $X$, 
respectively. In some cases, we use a fan-out gate and a Toffoli gate as elementary gates. 
Let $k\geq 1$ be an integer. A fan-out gate on $k+1$ 
qubits implements the operation defined as 
$|y\rangle\bigotimes_{j=1}^{k}|x_j\rangle \mapsto |y\rangle\bigotimes_{j=1}^{k}|x_j\oplus y\rangle$ 
for any $y,x_j \in \{0,1\}$, where $\oplus$ denotes addition modulo 2. 
The first input qubit is called the control qubit. 
A $k$-controlled Toffoli gate implements the operation on $k+1$ qubits defined as
$\left(\bigotimes_{j=1}^{k}|x_j\rangle\right)|y\rangle \mapsto 
\left(\bigotimes_{j=1}^{k}|x_j\rangle\right)|y \oplus \bigwedge_{j=1}^{k} x_j\rangle$, 
where $\bigwedge$ denotes the logical AND. The first $k$ input qubits are called the control 
qubits and the last input qubit is called the target qubit. These gates with $k=1$ 
are CNOT gates. When it is permitted to apply a fan-out gate and a Toffoli gate on a 
non-constant number of qubits, they are called an unbounded fan-out gate and an 
unbounded Toffoli gate, respectively.

To simplify the descriptions of quantum circuits, 
we use a $k$-controlled $Z(\theta)$ gate for any $\theta$ described above, which will be  
decomposed into elementary gates. The gate 
implements the operation on $k+1$ qubits defined as $\bigotimes_{j=1}^{k+1}|x_j\rangle 
\mapsto e^{i\theta\bigwedge_{j=1}^{k+1}x_j} \bigotimes_{j=1}^{k+1}|x_j\rangle$ 
for any $x_j \in \{0,1\}$. We can choose an arbitrary qubit as the target qubit and 
the other qubits are called the control qubits. 
The inverse of the gate is the $k$-controlled $Z(-\theta)$ gate. 
When it is permitted to apply the gate on a non-constant number of qubits, it is called an 
unbounded $Z(\theta)$ gate.

The complexity measures of a quantum circuit are its size and depth. The size of a 
quantum circuit is the total size of all elementary gates in the circuit, where the 
size of an elementary gate is the number of qubits on which the gate acts. 
To define the depth, we regard the circuit as a set of layers 
$1,\ldots,d$ consisting of elementary gates, where gates in the same layer act on pairwise 
disjoint sets of qubits and any gate in layer $j$ is applied before any gate in layer $j+1$. The 
depth of the circuit is the smallest possible value of $d$~\cite{Fenner}. 

We deal with a uniform family of polynomial-size quantum circuits $\{C_n\}_{n\geq 1}$, 
where no intermediate measurements are allowed. 
The uniformity means that the function $1^n \mapsto \overline{C_n}$ is classically 
computable in polynomial time, where $\overline{C_n}$ is the classical description of $C_n$. 
Each $C_n$ has $n$ input qubits 
and can have one output qubit and $n^{O(1)}$ ancillary qubits that are divided into 
two groups: $p=O(\log n)$ qubits and the remaining $q$ qubits. 
We assume that, for any $x \in \{0,1\}^n$ and $y \in \{0,1\}$, 
we can initialize the input qubits and output qubit to $|x\rangle$ and $|y\rangle$, 
respectively. We can also initialize the $p$ ancillary qubits to $|0\rangle$,
 which we call initialized ancillary qubits, but we cannot 
initialize the $q$ ancillary qubits and do not know their initial state. They 
are called uninitialized ancillary qubits. When $C_n$ has the output qubit, a measurement in 
the $Z$ basis is performed on it at the end of the 
computation. The classical outcome of the measurement, which is 0 or 1, is called the 
output of $C_n$. A symbol denoting a quantum circuit also denotes its matrix representation 
in the computational basis.

\subsection{Computability of Boolean Functions}\label{boolean}

A Boolean function $f_n$ on $n$ bits is a mapping $f_n:\{0,1\}^n\to \{0,1\}$. We define 
its computability by a quantum circuit with uninitialized ancillary qubits as follows:

\begin{definition}
Let $f_n$ be a Boolean function on $n$ bits and $C_n$ be a quantum 
circuit with $n$ input qubits, one output qubit, and $p$ initialized and $q$ 
uninitialized ancillary qubits. The circuit $C_n$ computes $f_n$ 
if, for any $x\in \{0,1\}^n$ and $y\in\{0,1\}$, when the input qubits and output qubit 
are initialized to $|x\rangle$ 
and $|y\rangle$, respectively, the output of $C_n$ is $y\oplus f_n(x)$ with probability 1, 
regardless of the initial state of the $q$ uninitialized ancillary qubits.
\end{definition}

A Boolean function is called symmetric if its output depends only on the number 
of ones in the input bits~\cite{Jukna}. Let ${\cal S}_n$ be the class of symmetric 
functions on $n$ bits that are classically computable in polynomial time. For example, 
${\cal S}_n$ includes the parity function PA$_n$ and the OR function OR$_n$. 
Here, for any $x =x_1\cdots x_n \in \{0,1\}^n$, ${\rm PA}_n(x)=1$ if $|x|$ is odd and 0 
otherwise, where $|x|=\sum_{j=1}^{n}x_j$. Moreover, 
${\rm OR}_n(x)= 1$ if $|x|\geq 1$ and 0 otherwise.

We define the function associated with $f_n \in {\cal S}_n$ as follows:

\begin{definition}\label{associated}
Let $f_n \in {\cal S}_n$. The function associated with $f_n$ is the 
Boolean function $g_m$ on $m=\lceil \log(n+1)\rceil$ bits defined as follows$:$ 
For any $s=s_1\cdots s_m\in\{0,1\}^m$, $g_m(s)=f_n(1^l0^{n-l})$ if 
$l \leq n$ and 0 otherwise, where $l= \sum_{k=1}^{m}s_k2^{k-1}$.
\end{definition}
\noindent
The function $g_m$ is classically computable in time $n^{O(1)}$ and, 
for any $x \in \{0,1\}^n$, if $s=s_1\cdots s_m$ is the binary representation of 
$|x|$, i.e., $|x| =\sum_{k=1}^{m}s_k2^{k-1}$, then $g_m(s) = f_n(x)$. 

We explain the idea of the original OR reduction quantum circuit~\cite{Hoyer}. The circuit 
has $n$ input qubits and $O(n\log n)$ initialized ancillary qubits, and reduces the computation 
of OR$_n$ to that of OR$_m$, where $m=\lceil \log(n+1)\rceil$. When the input state is 
$|x\rangle$ for any $x \in \{0,1\}^n$, the circuit transforms 
the state of $m$ initialized ancillary qubits into the state 
$\bigotimes_{k=1}^{m}|\varphi_k\rangle$, where 
$|\varphi_k\rangle = (|+\rangle+e^{\frac{2\pi i}{2^k}{|x|}}|-\rangle)/\sqrt{2}$ for any 
$1\leq k \leq m$. If $|x|=0$, then $|\varphi_k\rangle=|0\rangle$ for any $1\leq k \leq m$ and thus 
the output state is $|0^m\rangle$. If $|x|\geq 1$, then $|\varphi_{k}\rangle=|1\rangle$ for some 
$1\leq k \leq m$ and thus the output state is orthogonal to $|0^m \rangle$. 
Let $|x|=\sum_{k=1}^{m}s_k2^{k-1}$ for some $s_k\in\{0,1\}$. It is easy to show that 
the state $\bigotimes_{k=1}^{m}|s_k\rangle$ can be obtained by applying 
${\rm QFT}_{2^m}^\dag$ to the state $\bigotimes_{k=1}^{m}H|\varphi_k\rangle$, where 
QFT$_{2^m}^\dag$ is the inverse of the quantum Fourier transform modulo $2^m$.

\section{Shallow Quantum Circuits for Symmetric Functions}\label{parallel}

\subsection{OR Reduction Circuit with Uninitialized Ancillary Qubits}\label{idea}

Let $f_n \in {\cal S}_n$. We compute $f_n$ on input $x \in\{0,1\}^n$ using the following algorithm:
\begin{enumerate}
\item Compute the binary representation $s \in \{0,1\}^m$ of $|x|$, where 
$m=\lceil \log(n+1)\rceil$.

\item Compute $g_m(s) = f_n(x)$, where $g_m$ is the function associated with $f_n$.
\end{enumerate}
To implement Step 1, we construct an OR reduction circuit $Q_n$ with uninitialized 
ancillary qubits. As described above, we can obtain $s$ using 
$Q_n$ (with a layer of $H$ gates) and 
the standard $O(m)$-depth quantum circuit for QFT$_{2^m}^\dag$ with no 
ancillary qubits~\cite{Nielsen}. To implement Step 2, we construct a quantum 
circuit $R_m$ for $g_m$ with uninitialized ancillary~qubits.

\begin{figure}[t]
\centering
\includegraphics[scale=.32]{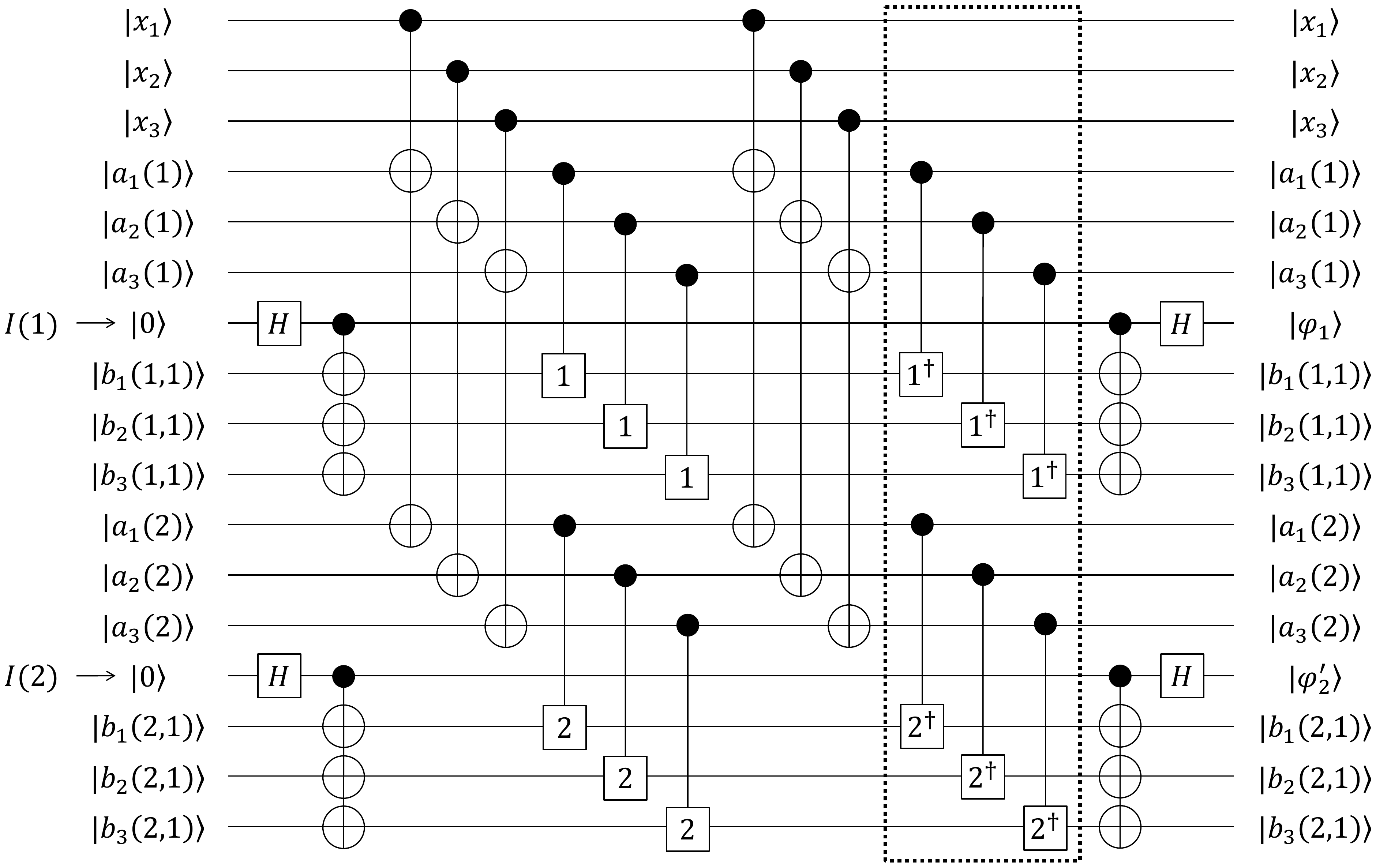}
\caption{The first stage of our OR reduction circuit with input 
$x=x_1x_2x_3\in\{0,1\}^3$. The gate next to the 
$H$ gate is a fan-out gate on four qubits, where the top qubit is the control qubit. 
For any integer $t \geq 1$, the gates $t$ and $t^\dag$ represent a $Z(2\pi/2^t)$ gate and 
its inverse, i.e., a $Z(-2\pi/2^t)$ gate, respectively. The dashed box represents the gates 
added to the original OR reduction circuit.}
\label{figure1}
\end{figure}

The circuit $Q_n$ is an $O((\log n)^2)$-depth OR reduction circuit with $n$ input qubits 
and $O(\log n)$ initialized and $O(n(\log n)^2)$ uninitialized ancillary qubits. 
To explain our idea for constructing $Q_n$, 
we consider the case where $n=3$ (and thus $m=2$). The first stage of $Q_n$ 
is depicted in Fig.~\ref{figure1}, where the initial state of the uninitialized ancillary qubits 
is represented by the (unknown) values $a_j(k),b_j(k,l) \in \{0,1\}$. This circuit is 
obtained by adding the gates in the dashed box to the original OR reduction circuit. 
We want to transform the initial states of the initialized ancillary qubits $I(1)$ 
and $I(2)$ into the states $|\varphi_1\rangle$ and 
$|\varphi_2\rangle$, respectively. If we do not apply the added gates, the 
output state of $I(k)$ is $(|+\rangle + e^{\frac{2\pi i}{2^k}\alpha(k,1)}|-\rangle)/\sqrt{2}$, 
where $\alpha(k,1)=\sum_{j=1}^3 (-1)^{b_j(k,1)}(x_j \oplus a_j(k))$ and $k=1,2$. The phase of 
this state depends on the initial state of the uninitialized ancillary qubits and we 
eliminate the dependency.

The point is that the added gates allow us to obtain an output state of $I(k)$ 
whose phase has a convenient form to eliminate the dependency. More concretely, 
by applying them, the output state of $I(k)$ is 
$(|+\rangle + e^{\frac{2\pi i}{2^k}{\gamma(k,1)}}|-\rangle)/\sqrt{2}$, where 
$\gamma(k,1) = |x| - 2\sum_{j=1}^3 x_j (a_j(k)\oplus b_j(k,1))$. Since 
$e^{\frac{2\pi i}{2}{\gamma(1,1)}} = e^{\frac{2\pi i}{2}|x|}$, the output state of 
$I(1)$ is equal to $|\varphi_1\rangle$ as desired. The dependency is eliminated since the 
terms in $\gamma(1,1)$ other than $|x|$ yield only an angle of a multiple of $2\pi$. 

Unfortunately, the output state of $I(2)$, which is represented as 
$|\varphi_2'\rangle$ in Fig.~\ref{figure1}, is not equal to $|\varphi_2\rangle$ in general since 
the phase $e^{\frac{2\pi i}{2^2}{\gamma(2,1)}}$ depends on the initial states of 
the uninitialized ancillary qubits, where 
$\gamma(2,1) = |x| - 2\sum_{j=1}^3 x_j (a_j(2)\oplus b_j(2,1))$. To eliminate the dependency, 
we consider the second stage where we add an angle $\frac{2\pi}{2^2}\delta(2,2)$ to 
the original angle $\frac{2\pi}{2^2}\gamma(2,1)$ using three new uninitialized ancillary qubits 
(not depicted in Fig.~\ref{figure1}). Here, their initial state 
is $|b_1(2,2)\rangle|b_2(2,2)\rangle|b_3(2,2)\rangle$ for any (unknown) $b_j(2,2) \in \{0,1\}$ 
and $\delta(2,2) = |x|- \gamma(2,1)
-2^2 \sum_{j=1}^3 x_j(a_j(2)\oplus b_j(2,1))(a_j(2)\oplus b_j(2,2))$. The value $\delta(2,2)$ has 
a form similar to $\gamma(2,1)$ and thus we can implement the second stage using a 
quantum circuit similar to the one in Fig.~\ref{figure1}. Since 
$e^{\frac{2\pi i}{2^2}(\gamma(2,1)+\delta(2,2))} = e^{\frac{2\pi i}{2^2}|x|}$, 
we obtain $|\varphi_2\rangle$ as desired.

We generalize the above idea. Let $x=x_1\cdots x_n\in\{0,1\}^n$ be an input. 
We prepare $n$ input qubits $X_1,\ldots,X_n$ and $m$ initialized ancillary 
qubits $I(1),\ldots, I(m)$, where $X_j$ is initialized to $|x_j\rangle$. 
We also prepare $nm(m+3)/2$ uninitialized ancillary qubits, 
which are divided into two groups, $A$ and $B$. Group $A$ consists of $mn$ qubits, 
which are divided into $m$ groups $A(1),\ldots,A(m)$. Each $A(k)$ consists of 
$n$ qubits $A_1(k),\ldots,A_n(k)$, where the initial state of $A_j(k)$ is 
$|a_j(k)\rangle$ for any (unknown) $a_j(k) \in\{0,1\}$. Group $B$ consists of $nm(m+1)/2$ 
qubits, which are divided into $m$ groups $B(1),\ldots,B(m)$. Each $B(k)$ consists of 
$kn$ qubits, which are divided into $k$ groups $B(k,1),\ldots,B(k,k)$. 
Each $B(k,l)$ consists of $n$ qubits $B_1(k,l),\ldots,B_n(k,l)$, where the initial state of 
$B_j(k,l)$ is $|b_j(k,l)\rangle$ for any (unknown) $b_j(k,l)\in\{0,1\}$. The circuit $Q_n$ 
consists of $m$ stages. As an example, Stages 1 and 2 with 
$n=3$ are given in Appendix~\ref{a-correct}. For any $1\leq s \leq m$, 
Stage $s$ is defined as follows:
\begin{enumerate}
\item Apply a $H$ gate to $I(k)$ for every $s \leq k \leq m$ in parallel.

\item Apply a fan-out gate on $n+1$ qubits to $B_1(k,s),\ldots,B_n(k,s)$, and $I(k)$ 
for every $s \leq k \leq m$ in parallel, where $I(k)$ is the control qubit.

\item If $s \geq 2$, then apply a fan-out gate on $s$ qubits to 
$B_j(k,1),\ldots,B_j(k,s-1)$, and $A_j(k)$ for every $s \leq k \leq m$ and $1\leq j \leq n$ 
in parallel, where $A_j(k)$ is the control qubit.

\item Apply a fan-out gate on $m-s+2$ qubits to $A_j(s),A_j(s+1),\ldots,A_j(m)$, and $X_j$ 
for every $1 \leq j \leq n$ in parallel, where $X_j$ is the control qubit.

\item Apply an $s$-controlled $Z(2\pi/2^{k-s+1})$ gate to $B_j(k,s)$ and the following 
qubits for every $s \leq k \leq m$ and $1 \leq j \leq n$ in parallel: 
$A_j(k)$ if $s=1$ and $B_j(k,1),\ldots,B_j(k,s-1)$, and $A_j(k)$ otherwise.

\item Apply the gates in Step 4.

\item Apply the inverse of the gates in Step 5.

\item Apply the gates in Step 3, Step 2, and Step 1 (in this order).
\end{enumerate}

The circuit $Q_n$ outputs the desired state. The 
proof can be found in Appendix~\ref{a-correct}.

\begin{lemma}\label{correct}
Let $x=x_1\cdots x_n \in\{0,1\}^n$ be an input. For any $1 \leq k \leq m$ and 
$1 \leq s \leq k$, the state of $I(k)$ after Stage $s$ is the state
$(|+\rangle + e^{\frac{2\pi i}{2^k}\gamma(k,s)}|-\rangle)/\sqrt{2}$, 
where $\gamma(k,s) = |x| - 2^s \sum_{j=1}^n x_j\bigwedge_{l=1}^{s} (a_j(k)\oplus b_j(k,l))$. 
Moreover, the state of any qubit other than the initialized ancillary qubits is the same as 
its initial one. In particular, the state of $I(k)$ after Stage $k$ is the state $|\varphi_k\rangle$.

\end{lemma}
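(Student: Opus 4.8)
The plan is to prove Lemma~\ref{correct} by induction on the stage index $s$. It is convenient to set $\gamma(k,0)=0$ and to read empty conjunctions as $1$; with this convention $\gamma(k,s)$ evaluated at $s=0$ equals $|x|-\sum_j x_j=0$, so the initial configuration (``after Stage $0$'') already satisfies the claim because $(|+\rangle+|-\rangle)/\sqrt2=|0\rangle$ and no ancillary qubit has yet been acted on. It then suffices to show that the claim after Stage $s-1$ implies the claim after Stage $s$; this single argument also covers $s=1$. Moreover Stage $s$ acts only on the qubits $I(k),X_j,A_j(k),B_j(k,l)$ with $s\le k\le m$, so the qubits $I(k)$ with $k<s$ are untouched in Stage $s$ and remain at the value $|\varphi_k\rangle$ they acquired in Stage $k$.

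For the inductive step, fix $k$ with $s\le k\le m$ and follow the eight substeps of Stage $s$, recording the computational-basis contents of the $A$- and $B$-registers and the relative phase carried by $I(k)$. By the induction hypothesis $I(k)$ enters Stage $s$ in $(|+\rangle+e^{\frac{2\pi i}{2^k}\gamma(k,s-1)}|-\rangle)/\sqrt2$ with every other qubit at its initial value. Step~1 applies $H$, producing $\frac{1}{\sqrt2}(|0\rangle+e^{\frac{2\pi i}{2^k}\gamma(k,s-1)}|1\rangle)$; write $y\in\{0,1\}$ for the value of $I(k)$ in this basis. Step~2 makes $B_j(k,s)$ hold $b_j(k,s)\oplus y$; Step~3 (present when $s\ge2$, and applied while $A_j(k)$ still equals $a_j(k)$) makes $B_j(k,l)$ hold $b_j(k,l)\oplus a_j(k)$ for $l<s$; Step~4 makes $A_j(k)$ hold $a_j(k)\oplus x_j$. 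Hence the $s$-controlled $Z(2\pi/2^{k-s+1})$ gates of Step~5 multiply the $y$-branch of $I(k)$ by $\exp\!\big(\frac{2\pi i}{2^{k-s+1}}\sum_j(b_j(k,s)\oplus y)(a_j(k)\oplus x_j)\prod_{l<s}(a_j(k)\oplus b_j(k,l))\big)$; Step~6 restores $A_j(k)=a_j(k)$; and Step~7, being the inverse of Step~5 but now evaluated with $A_j(k)=a_j(k)$, contributes the analogous factor with $a_j(k)$ in place of $a_j(k)\oplus x_j$ and opposite sign. So the net phase accumulated over Steps~5--7 in the $y$-branch is $\Phi_k(y)=\frac{2\pi}{2^{k-s+1}}\sum_j(b_j(k,s)\oplus y)\big((a_j(k)\oplus x_j)-a_j(k)\big)\prod_{l<s}(a_j(k)\oplus b_j(k,l))$, which depends on $I(k)$ only through $y$ and involves no other $I(k')$. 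Step~8 then re-runs Steps~3, 2, 1: the first two uncompute the $B$-registers and, in doing so, disentangle $I(k)$ from $B(k,s)$ (the $A$-registers having already been restored), leaving $I(k)$ in $\frac{1}{\sqrt2}(|0\rangle+e^{i(\frac{2\pi}{2^k}\gamma(k,s-1)+\Phi_k(1)-\Phi_k(0))}|1\rangle)$ up to the scalar $e^{i\Phi_k(0)}$, and the concluding $H$ turns this into $(|+\rangle+e^{i(\frac{2\pi}{2^k}\gamma(k,s-1)+\Phi_k(1)-\Phi_k(0))}|-\rangle)/\sqrt2$.

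It then remains to verify the single identity $\Phi_k(1)-\Phi_k(0)=\frac{2\pi}{2^k}(\gamma(k,s)-\gamma(k,s-1))$. Using the Boolean arithmetic facts $(a\oplus x)-a=(1-2a)x$ and $(1-2b)(1-2a)=1-2(a\oplus b)$ together with $\bigwedge_{l=1}^{s}(a_j(k)\oplus b_j(k,l))=(a_j(k)\oplus b_j(k,s))\bigwedge_{l=1}^{s-1}(a_j(k)\oplus b_j(k,l))$, the left-hand side collapses to $\frac{2\pi}{2^{k-s+1}}\sum_j x_j\bigwedge_{l=1}^{s-1}(\cdots)-\frac{2\pi}{2^{k-s}}\sum_j x_j\bigwedge_{l=1}^{s}(\cdots)$, while the right-hand side equals the same expression because $\gamma(k,s)-\gamma(k,s-1)=2^{s-1}\sum_j x_j\bigwedge_{l=1}^{s-1}(\cdots)-2^{s}\sum_j x_j\bigwedge_{l=1}^{s}(\cdots)$; here the Step~5 angles $2\pi/2^{k-s+1}$ are exactly what make the two sides match. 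Thus $I(k)$ ends Stage $s$ in $(|+\rangle+e^{\frac{2\pi i}{2^k}\gamma(k,s)}|-\rangle)/\sqrt2$, the $A$-, $B$- and $X$-qubits are back to their initial values, and the collected phases $e^{i\Phi_k(0)}$ form a harmless global factor. For $k=s$ one has $\gamma(s,s)=|x|-2^s\cdot(\text{integer})\equiv|x|\pmod{2^s}$, so $e^{\frac{2\pi i}{2^s}\gamma(s,s)}=e^{\frac{2\pi i}{2^s}|x|}$ and the state is exactly $|\varphi_s\rangle$; since no later stage acts on $I(s)$, this value persists to the end of $Q_n$.

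The main difficulty is not any single deep step but the disciplined bookkeeping through the eight substeps --- in particular getting right the asymmetry that Step~3 conjugates the $B$-registers by the \emph{original} $a_j(k)$ while Step~4 toggles $A_j(k)$ by $x_j$, so that Steps~5 and 7 differ precisely by the $x_j$ term and the conjunctions $\bigwedge_{l}(a_j(k)\oplus b_j(k,l))$ build up one factor per stage --- followed by checking that the Step~5 phase angles are calibrated so that $\Phi_k(1)-\Phi_k(0)$ telescopes $\gamma(k,s-1)$ into $\gamma(k,s)$. A secondary point worth recording explicitly is that $\Phi_k(y)$ depends on $I(k)$ only through $y$; this is what guarantees that Step~8 genuinely disentangles $I(k)$ and that the per-$k$ phases $e^{i\Phi_k(0)}$ combine into a single global phase rather than producing correlations among the $I(k)$'s.
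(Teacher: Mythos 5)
Your proof is correct and follows essentially the same route as the paper's: induction on the stage index with a careful pass through the eight substeps, factoring out the $y=0$ branch phase as a global factor and using $(1-2b)(1-2a)x=(1-2(a\oplus b))x$ to show the accumulated relative phase telescopes $\gamma(k,s-1)$ into $\gamma(k,s)$. The only (harmless) difference is that you anchor the induction at $s=0$ via a convention and treat $s=1$ inside the general step, and you package the Step 5/Step 7 contributions as a single $\Phi_k(1)-\Phi_k(0)$ rather than the paper's separate $\alpha(k,s)+\beta(k,s)$ computation.
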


The circuit $Q_n$ has the desired complexity. The proof can be found in 
Appendix~\ref{a-complexity}.

\begin{lemma}\label{complexity}
The circuit $Q_n$ uses $O(\log n)$ initialized and $O(n(\log n)^2)$ uninitialized ancillary 
qubits, and its depth is $O((\log n)^2)$, when the elementary gate set is $\cal G$.
\end{lemma}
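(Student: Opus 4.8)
The plan is to establish the two halves of the statement in turn: the counts of initialized and uninitialized ancillary qubits, which come from a direct summation over the qubit groups used by $Q_n$, and the $O((\log n)^2)$ depth bound, which additionally requires decomposing the fan-out gates and the $s$-controlled $Z(\theta)$ gates occurring in the stages into the elementary gate set $\cal G$ and accounting for the resulting overhead.

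For the counts, the initialized ancillary qubits are precisely $I(1),\ldots,I(m)$, so there are $m=\lceil\log(n+1)\rceil=O(\log n)$ of them. For the uninitialized ones, group $A$ contains $\sum_{k=1}^{m}n=mn$ qubits and group $B$ contains $\sum_{k=1}^{m}kn=nm(m+1)/2$ qubits, giving $nm(m+3)/2=O(nm^{2})=O(n(\log n)^{2})$ in total; I then add the ancillary qubits consumed by the $\cal G$-decompositions below, which I will argue number $O(n(\log n)^{2})$ as well, so the stated bound is preserved.

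For the depth, I would first count layers before any decomposition. The circuit $Q_n$ has $m=O(\log n)$ stages, and within a stage each of the eight steps is one layer of gates (three for Step 8), because the gates in a step are indexed by $k$ and/or $j$ and act on the pairwise disjoint blocks $X_j$, $A_j(k)$, $B_j(k,l)$, $I(k)$, so they are applicable in parallel. Hence a stage is $O(1)$ layers and $Q_n$ has depth $O(m)=O(\log n)$ measured in $H$ gates, fan-out gates, and $s$-controlled $Z(\theta)$ gates. I then replace each such layer by a $\cal G$-circuit: a layer of $H$ gates stays depth $1$; a layer of fan-out gates, each on at most $n+1$ qubits, becomes depth $O(\log n)$ using the implementation of a fan-out gate on $k$ qubits as an $O(\log k)$-depth $\cal G$-circuit obtained by conjugating the parity gate by Hadamard gates, the parity gate itself being computed in place by a compute/uncompute binary tree of CNOT gates and needing no extra ancillas; a layer of $s$-controlled $Z(\theta)$ gates, each on at most $s+1=O(\log n)$ qubits, becomes depth $O(s)=O(\log n)$ via the recursive decomposition of \cite{Barenco} into multiply-controlled NOT gates and controlled-$Z(\theta')$ gates, where $\theta'=\theta/2=2\pi c/2^{t+1}$ is still of the form allowed in $\cal G$ and the multiply-controlled NOT gates use $O(s)$ dirty ancillary qubits. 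Thus each stage becomes an $O(\log n)$-depth $\cal G$-circuit and $Q_n$ has depth $m\cdot O(\log n)=O((\log n)^{2})$.

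The step I expect to be the main obstacle is checking that these $\cal G$-decompositions respect both resource bounds at once. On the depth side, it matters that fan-out on $n+1$ qubits is done in depth $O(\log n)$ rather than the naive $\Theta(n)$ — which is why the parity/Hadamard implementation is used — and that a $Z(\theta)$ gate controlled on $O(\log n)$ qubits is done in depth $O(\log n)$. On the qubit side, the only decompositions that consume extra ancillas are those of the $s$-controlled $Z(\theta)$ gates in Steps 5 and 7: at most $(m-s+1)n$ of them run in parallel, each needing $O(s)=O(\log n)$ dirty helpers, for $O(nm^{2})=O(n(\log n)^{2})$ helpers in a stage; supplying them as $O(n(\log n)^{2})$ additional uninitialized ancillary qubits (which the dirty-ancilla decompositions return to their initial state, consistent with Lemma~\ref{correct}, and which can be reused across stages) keeps the total uninitialized-ancilla count at $O(n(\log n)^{2})$.
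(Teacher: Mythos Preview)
Your qubit counts and the fan-out decomposition are fine, but the depth analysis of Steps~5 and~7 has a real gap. You assert that an $s$-controlled $Z(\theta)$ gate decomposes to depth $O(s)$ over $\cal G$ using $O(s)$ dirty ancillas via the recursive Barenco construction. That is not what the construction delivers: one level of the recursion turns an $s$-controlled $Z(\theta)$ gate into an $(s{-}1)$-controlled $Z(\theta/2)$ gate, two $(s{-}1)$-controlled Toffoli gates, and two $1$-controlled $Z(\pm\theta/2)$ gates; unrolling produces a \emph{sequential} chain containing multiply-controlled Toffoli gates of sizes $s{-}1,s{-}2,\ldots,1$, each of depth $\Theta(j)$ even with dirty helpers, for $\Theta(s^{2})$ depth in total. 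With $s\le m=\Theta(\log n)$ and $m$ stages this yields $O((\log n)^{3})$, not $O((\log n)^{2})$. The alternative of computing the AND of the $s$ controls into a borrowed ancilla and then applying a $1$-controlled $Z(\theta)$ also fails for $\theta\ne\pi$ precisely because the ancilla is dirty: the accumulated phase depends on its unknown initial bit.

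The paper flags exactly this obstacle---it states that decomposing each gate by the standard method does not yield the desired complexity---and instead exploits the joint structure of Steps~5--7. Step~5 applies an $s$-controlled $Z(\theta)$ gate, Step~7 its inverse, and the fan-out in Step~6 touches only the qubit $A_j(k)$ among those involved. Applying one Barenco level to both controlled-$Z$ gates, the resulting $(s{-}1)$-controlled $Z(\pm\theta/2)$ gates (which do not involve $A_j(k)$) commute past the fan-out and cancel each other, and one pair of $(s{-}1)$-controlled Toffoli gates cancels as well. What remains per $(k,j)$ is the fan-out, two $(s{-}1)$-controlled Toffoli gates, and four $1$-controlled $Z$ gates. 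Multiply-controlled Toffoli gates \emph{do} admit an $O(s)$-depth $\cal G$-decomposition with a single uninitialized ancilla, so each stage becomes $O(\log n)$ depth and the total is $O((\log n)^{2})$. This cancellation step is the missing idea in your argument.
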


\subsection{Circuit for the Function Associated with a Symmetric Function}

We construct an $O(m^{2})$-depth quantum circuit $R_m$ for $g_m$ with 
$m=\lceil \log(n+1)\rceil$ input qubits, one output qubit, and $O(m2^{m})$ uninitialized 
ancillary qubits, where $g_m$ is the function associated with $f_n \in {\cal S}_n$. 
The circuit uses (a slight modification of) the Fourier expansion of 
$g_m$~\cite{Jukna}: For any $s=s_1\cdots s_m \in \{0,1\}^m$, $g_m(s) = g_m(0^m) +\frac{2}{2^{m}} 
\sum_{t}c_t \bigoplus_{k=1}^m t_ks_k$, 
where $c_t=\sum_{u} g_m(u) (2\bigoplus_{k=1}^m u_kt_k-1)$, $t=t_1\cdots t_m$ ranges 
over $\{0,1\}^m \setminus \{0^m\}$, and $u=u_1\cdots u_m$ ranges 
over $\{0,1\}^m$. Since $m=O(\log n)$ and $g_m$ is classically computable in time 
$n^{O(1)}$, the number of $c_t$'s with $t \in \{0,1\}^m \setminus \{0^m\}$ is $n^{O(1)}$ and the 
function $t \mapsto c_t$ is also classically computable in time $n^{O(1)}$. This implies the 
uniformity of our circuit family for $f_n$.

The circuit $R_m$ with input $s=s_1\cdots s_m \in \{0,1\}^m$ is based on the following algorithm:
\begin{enumerate}
\item Compute the parity value $\bigoplus_{k=1}^mt_ks_k$ for every 
$t\in \{0,1\}^m \setminus \{0^m\}$ in parallel.

\item Prepare $(|+\rangle+e^{\pi i g_m(s)}|-\rangle)/\sqrt{2} = |g_m(s)\rangle$ using the above 
representation of $g_m$.
\end{enumerate}
Since we do not have any initialized ancillary qubit, in Step 1, we can only have the 
parity values on uninitialized ancillary qubits, i.e., $a_t \oplus \bigoplus_{k=1}^mt_ks_k$ for every 
$t\in \{0,1\}^m \setminus \{0^m\}$, 
where the initial state of the uninitialized ancillary qubits is represented by the 
(unknown) values $a_t\in\{0,1\}$. Thus, in Step 2, we have to use such values 
to prepare $(|+\rangle+e^{\pi i g_m(s)}|-\rangle)/\sqrt{2}
=X^{g_m(0^m)} (|+\rangle+e^{\frac{2 \pi i}{2^{m}} \sum_{t}c_t 
\bigoplus_{k=1}^m t_ks_k}|-\rangle)/\sqrt{2}$, which does not depend on $a_t$. The point is 
that this situation is essentially the same as the one where $|\varphi_m\rangle$ is prepared 
by $Q_n$ as described in Section~\ref{idea}, i.e., where we can only have the values 
$a_j(m)\oplus x_j$ for every $1 \leq j \leq n$ and we have to use them to prepare 
$|\varphi_m\rangle = (|+\rangle+e^{\frac{2 \pi i}{2^{m}}{|x|}}|-\rangle)/\sqrt{2}$, 
which does not depend on $a_j(m)$. Thus, roughly speaking, we can construct $R_m$ 
in a similar way to a part of $Q_n$.

A slight difference between these situations is that, in $Q_n$, it is very easy to 
prepare the values  $a_j(m)\oplus x_j$ from the input bits $x_j$, but, in $R_m$, 
we need to consider a quantum circuit for computing the parity values 
$a_t \oplus \bigoplus_{k=1}^mt_ks_k$ from the input bits $s_k$, i.e., for the operation on 
$2^m+m-1$ qubits defined as $|s\rangle\bigotimes_{t}|a_t\rangle \mapsto 
|s\rangle\bigotimes _{t}|a_t \oplus \bigoplus_{k=1}^mt_ks_k\rangle$ for any $s\in\{0,1\}^m$ 
and $a_t\in\{0,1\}$. If we have $m2^{m-1}$ {\em initialized} ancillary qubits, it is easy to construct 
an $O(m)$-depth quantum circuit for the operation using the following algorithm:
\begin{enumerate}
\item Prepare $2^{m-1}$ copies of $s_k$ on the ancillary qubits 
for every $1 \leq k \leq m$ in parallel.

\item Compute the parity value $a_t\oplus \bigoplus_{k=1}^mt_ks_k$ for every $t\in \{0,1\}^m \setminus \{0^m\}$ in parallel.
\end{enumerate}
To implement Step 1, we apply fan-out gates on $2^{m-1}+1$ qubits, each of which 
can be decomposed into an $O(m)$-depth quantum circuit as 
described in Appendix~\ref{a-complexity}. Since it is easy to construct an 
$O(\log m)$-depth quantum circuit for PA$_m$ using a binary 
tree structure, we can implement Step 2 using a parallel application of such circuits. If we 
replace the initialized ancillary qubits with uninitialized ones, the circuit does not work. 
However, applying the circuit again yields the desired values. In fact, the first circuit outputs 
$a_t \oplus \bigoplus_{k=1}^mt_ks_k \oplus d$ for some $d \in \{0,1\}$ that is computed from 
the (unknown) values in  $\{0,1\}$ representing the initial state of the uninitialized ancillary 
qubits, and the second one outputs 
$a_t\oplus \bigoplus_{k=1}^mt_ks_k \oplus d\oplus d = a_t \oplus \bigoplus_{k=1}^mt_ks_k$ as 
desired. Using this circuit, we construct  $R_m$ and show the following lemma. 
The details can be found in Appendix~\ref{a-exp}.

\begin{lemma}\label{exp}
The circuit $R_m$ computes $g_m$. Moreover, it uses no initialized and $O(m2^m)$ 
uninitialized ancillary qubits, and its depth is $O(m^2)$, when the elementary 
gate set is $\cal G$.
\end{lemma}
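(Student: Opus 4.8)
The plan is to follow the blueprint already sketched in Section~3.2: realize the two-step algorithm (compute parity values on uninitialized qubits, then synthesize $|g_m(s)\rangle$ from them) as an explicit circuit built from the gate set $\cal G$, and verify the three claimed properties (correctness, qubit count $O(m2^m)$, depth $O(m^2)$). First I would set up the parity-computation subcircuit $P_m$ implementing $|s\rangle\bigotimes_t|a_t\rangle \mapsto |s\rangle\bigotimes_t|a_t\oplus\bigoplus_k t_k s_k\rangle$. Using $m2^{m-1}$ uninitialized ancillary qubits in place of initialized ones, I run the fan-out-then-parity-tree construction once; by the argument in the excerpt it outputs $a_t\oplus\bigoplus_k t_k s_k\oplus d$ for a fixed $d$ depending only on the (unknown) initial ancillary values; running $P_m$ a second time cancels the $d$ and restores the input qubits $s_k$ and the copying ancillas to their initial states. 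Each application of $P_m$ is $O(m)$ depth, using fan-out gates on $2^{m-1}+1$ qubits decomposed into $O(m)$-depth subcircuits over $\cal G$ as in Appendix~\ref{a-complexity}, plus $O(\log m)$-depth parity trees; so $P_m$ has depth $O(m)$. I would also need to double-check that the $O(m)$-depth fan-out decomposition itself only uses the uninitialized-ancilla trick correctly (i.e., returns its helper qubits to their initial state), or alternatively that the ancillas it needs can be drawn from a fresh uninitialized block and cleaned by a second pass.

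Next I would implement Step~2: from the values $e_t := a_t\oplus\bigoplus_k t_k s_k$ (now sitting on $n^{O(1)}$ uninitialized qubits), produce the output state $X^{g_m(0^m)}\bigl(|+\rangle + e^{\frac{2\pi i}{2^m}\sum_t c_t\,\bigoplus_k t_k s_k}|-\rangle\bigr)/\sqrt2$ on the output qubit. This is structurally the same problem solved inside $Q_n$ when preparing $|\varphi_m\rangle$ from the values $a_j(m)\oplus x_j$: one applies controlled $Z$-rotations keyed by $c_t$ to the output qubit (after an $H$), but because the available bits are $e_t$ rather than $\bigoplus_k t_k s_k$, the naive rotation picks up a spurious phase depending on the $a_t$. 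Following the $Q_n$ idea, I introduce an auxiliary layer of uninitialized qubits and controlled rotations that add a compensating angle so that the total phase collapses to $\frac{2\pi}{2^m}\sum_t c_t\bigoplus_k t_k s_k$ modulo $2\pi$; I would spell out the correction term as the analogue of the $\gamma,\delta$ bookkeeping in Lemma~\ref{correct}, iterating over $m$ ``stages'' so that after stage $m$ all dependence on the uninitialized values has cancelled. Since each $c_t$ is an integer of magnitude $\le 2^m$ and is classically computable in time $n^{O(1)}$, the rotation angles $2\pi c_t/2^m$ lie in $\cal G$ and the whole family is uniform. Finally I uncompute Step~2's auxiliary block and re-apply $P_m$'s inverse where needed so every non-output qubit is restored; a concluding $X^{g_m(0^m)}$ (a $Z(\pi)$-conjugated-by-$H$, i.e. an $X$) and an $H$ on the output qubit leave it in $|y\oplus g_m(s)\rangle$.

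For the complexity bookkeeping: Step~1 uses $O(m2^m)$ uninitialized qubits and $O(m)$ depth; Step~2 uses, per stage, $O(2^m)$ uninitialized qubits (one block of $n$-size... rather $2^m$-size per correction level) for $O(m)$ levels, so $O(m2^m)$ qubits total, and each stage costs $O(m)$ depth (dominated by decomposing the many-controlled $Z$ gates and fan-outs into $\cal G$), giving $O(m^2)$ overall; uncomputation at most doubles everything. So the totals are no initialized ancillas, $O(m2^m)$ uninitialized ancillas, depth $O(m^2)$, as claimed. The main obstacle I expect is the error-free cancellation of the uninitialized-ancilla dependence in Step~2: unlike the $Q_n$ case where the ``input'' bits $x_j$ are directly available, here the bits feeding the rotations are themselves parities $e_t$ produced on uninitialized qubits, so I must make sure the compensating-angle construction only ever references quantities that are genuinely accessible (XORs of $e_t$'s with $a_t$-dependent junk that provably cancels after both passes of $P_m$), and that the nested ``stages'' close up exactly so the final phase has no residual $a_t$-term even when several $t$'s interact. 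Getting the combinatorics of which controlled rotations to apply at which stage — the precise analogue of $\delta(k,s)$ in Lemma~\ref{correct} — right is the delicate part; once that identity is in hand, correctness, the depth bound, and the ancilla count all follow by the same routine layering and fan-out-decomposition arguments used for $Q_n$ in Appendices~\ref{a-correct} and~\ref{a-complexity}.
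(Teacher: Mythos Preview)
Your proposal is correct and follows essentially the same construction as the paper: $R_m$ consists of $m$ stages mirroring those of $Q_n$, with your $P_m$ (the paper's PARITY$(m)$ gate, built via the run-twice trick you describe) playing the role that the fan-out from $X_j$ to $A_j(k)$ plays in $Q_n$; the auxiliary uninitialized qubits $B_t(l)$ and the $u$-controlled $Z(2\pi c_t/2^{m-u+1})$ gates are exactly the analogues of $B_j(k,l)$ and the controlled phase gates in $Q_n$. The only clarification needed is that $P_m$ is invoked \emph{inside} each stage (in Steps~4 and~6, to toggle $A_t$ between $|a_t\rangle$ and $|a_t\oplus\bigoplus_k t_k s_k\rangle$) rather than once at the outset as your phrasing ``now sitting on'' suggests --- this toggling is what makes the $\gamma$--$\delta$ cancellation from Lemma~\ref{correct} go through verbatim.
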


Combining $R_m$ with $Q_n$ immediately implies Theorem~\ref{can}:

\begin{proof}[Proof of Theorem~\ref{can}]
By Lemmas~\ref{correct}, \ref{complexity}, and~\ref{exp}, we can use $Q_n$ and $R_m$ 
to implement the algorithm for $f_n \in {\cal S}_n$ described at the beginning of  
Section~\ref{idea} and the whole circuit has the desired complexity.
\end{proof}

\section{Classical Algorithms with Access to Shallow Quantum Circuits}\label{solving}

Let $p(n)$ be a polynomial and $C_n$ be a constant-depth quantum circuit on $n$ qubits 
consisting of the gates in $\cal G$. The 
problem MAT$(p(n),C_n)$ is to compute a real number $\alpha_x$ such that 
$|\alpha_x - |\langle0^n|C_n|x\rangle|^2| \leq 1/p(n)$ for any input $x \in \{0,1\}^n$. 
For any $x,w\in\{0,1\}^n$, we define 
$F_n(x,w)=\langle x|C_n^\dag (\bigotimes_{j=1}^n Z_j^{w_j}) C_n|x\rangle$, where 
$w=w_1\cdots w_n$ and $Z_j$ is $Z$ applied to the $j$-th qubit of $C_n$. 
As shown in~\cite{Ni}, {\rm MAT}$(p(n),C_n)$ 
can be solved with probability exponentially (in $n$) close 
to 1 if there exists a probabilistic algorithm $A_{F_n}$ such that, for any $x,w \in \{0,1\}^n$, 
the probability that $|A_{F_n}(x,w)-F_n(x,w)| \leq 0.5/p(n)$ is exponentially close to 1. In fact, 
due to the Chernoff-Hoeffding bound, the algorithm for MAT$(p(n),C_n)$ on 
input $x\in\{0,1\}^n$ is described with some $K=n^{O(1)}$ as follows: 
Choose $w(j) \in \{0,1\}^n$ uniformly at random and compute $A_{F_n}(x,w(j))$ for every 
$1 \leq j \leq K$, and output $(1/K)\sum_{j=1}^K A_{F_n}(x,w(j))$.

The probabilistic algorithm $A_{F_n}$ in~\cite{Ni} can be considered as a repetition of a 
commuting quantum circuit $D_{2n}$ for the Hadamard test with $2n$ input qubits and 
one output qubit. For any $x,w \in\{0,1\}^n$, the output of $D_{2n}$ with 
the input qubits 
initialized to $|x\rangle|w\rangle$ and output qubit initialized to $|0\rangle$ is 0 
with probability $(1+F_n(x,w))/2$. Thus, when the 
outputs 0 and 1 are regarded as 1 and $-1$, respectively, due to the 
Chernoff-Hoeffding bound, $A_{F_n}$ is described with some $L=n^{O(1)}$ as 
follows, where the input is the pair of $x$ and $w$: Perform $D_{2n}$ with the input qubits 
initialized to $|x\rangle|w\rangle$ and output qubit initialized to $|0\rangle$, and obtain its 
output $z_j(x,w) \in \{1,-1\}$ for every $1 \leq j \leq L$. 
After that, output $(1/L)\sum_{j=1}^L z_j(x,w)$.

Our idea for proving Theorem~\ref{oracle} is to construct a parallelized version of the 
Hadamard test, denoted by $E_{2n}$, by using uninitialized ancillary qubits and replace 
$D_{2n}$ in the above algorithm for MAT$(p(n),C_n)$ with $E_{2n}$. Although 
the standard Hadamard test is a sequential application of controlled gates with the same 
control qubit, roughly speaking, $E_{2n}$ first prepares the copies of the state of 
the control qubit on uninitialized ancillary qubits and then applies the gates in parallel 
by using the copies. To be precise, let 
$x=x_1\cdots x_n, w=w_1\cdots w_n \in\{0,1\}^{n}$. We prepare $2n$ input 
qubits $X_1,\ldots,X_n, W_1,\ldots, W_n$, one output qubit $Y$, and $n$ uninitialized 
ancillary qubits $G(1),\ldots,G(n)$, where $X_j$, $W_j$, and $Y$ are initialized to 
$|x_j\rangle$, $|w_j\rangle$, and $|0\rangle$, respectively. The initial state of the 
uninitialized ancillary qubits is arbitrary. As an example, $E_{2n}$ with $n=3$ is given 
in Appendix~\ref{a-correct2}. The circuit $E_{2n}$ is defined as follows:
\begin{enumerate}
\item Apply a $H$ gate to $Y$.

\item Apply a fan-out gate on $n+1$ qubits to $G(1),\ldots,G(n)$, and $Y$, 
where $Y$ is the control~qubit.

\item Apply $C_n$ to $X_1,\ldots,X_{n-1}$, and $X_n$.

\item Apply a 2-controlled $Z$ gate to $G(j)$, $X_j$, and $W_j$ for every $1 \leq j \leq n$ 
in parallel.

\item Apply $C_n^\dag$ to $X_1,\ldots,X_{n-1}$, and $X_n$.

\item Apply the gates in Step~2 and Step~1 (in this order).
\end{enumerate}
Each fan-out gate can be decomposed into an $O(\log n)$-depth quantum 
circuit as described in Appendix~\ref{a-complexity}. 
Moreover, a 2-controlled $Z$ gate can be decomposed into a constant number of the gates 
in $\cal G$~\cite{Barenco,Nielsen}. Thus, $E_{2n}$ is an $O(\log n)$-depth circuit consisting 
of the gates in $\cal G$. It has the desired output probability distribution. 
The proof can be found in Appendix~\ref{a-correct2}.

\begin{lemma}\label{correct2}
For any $x,w \in\{0,1\}^{n}$, the output of $E_{2n}$ with the input qubits 
initialized to $|x\rangle|w\rangle$ and output qubit initialized to $|0\rangle$ is 0 with 
probability $(1+F_n(x,w))/2$.
\end{lemma}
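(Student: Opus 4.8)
\textbf{Proof proposal for Lemma~\ref{correct2}.}

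The plan is to trace the state of $E_{2n}$ through its six steps, keeping the uninitialized ancillary qubits symbolic, and show that the phases they contribute cancel out exactly as in the analysis of $Q_n$. Write the initial state of $G(1),\ldots,G(n)$ in the computational basis, so it suffices to fix arbitrary values $g_1,\ldots,g_n \in \{0,1\}$ and prove the claim for the basis state $\bigotimes_{j=1}^n |g_j\rangle$; the general case follows by linearity since the output probability will turn out to be independent of the $g_j$. After Step~1 the output qubit $Y$ is in $|+\rangle$. After Step~2 (the fan-out with $Y$ as control) the joint state of $Y,G(1),\ldots,G(n)$ is $\frac{1}{\sqrt2}\bigl(|0\rangle\bigotimes_j|g_j\rangle + |1\rangle\bigotimes_j|g_j\oplus 1\rangle\bigr)$, i.e.\ the control bit $b\in\{0,1\}$ of $Y$ is copied (xor-ed) into every $G(j)$.

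Next I would condition on $b$. In the $|0\rangle_Y$ branch, $G(j)$ holds $g_j$; in the $|1\rangle_Y$ branch, $G(j)$ holds $g_j\oplus 1$. Steps~3--5 apply $C_n$ to the $X$-register, then the parallel $2$-controlled $Z$ gates on $(G(j),X_j,W_j)$, then $C_n^\dagger$. The $2$-controlled $Z$ on $(G(j),X_j,W_j)$ multiplies by $(-1)^{(g_j\oplus b)\,x_j\,w_j}$ acting in the basis where $X_j$ has value $x_j$; summed over $j$ and over the computational basis of $C_n|x\rangle$, the layer of Step~4 realizes, on the $X$-register, the diagonal operator $\bigotimes_j Z_j^{(g_j\oplus b)w_j}$. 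Hence the $X$-register undergoes $C_n^\dagger\bigl(\bigotimes_j Z_j^{(g_j\oplus b)w_j}\bigr)C_n$. The key algebraic point is that $(g_j\oplus b)w_j = g_jw_j \oplus bw_j$ when $w_j\in\{0,1\}$, so $Z_j^{(g_j\oplus b)w_j} = Z_j^{g_jw_j} Z_j^{bw_j}$, and therefore the operator on the $X$-register factors as $\bigl(C_n^\dagger (\bigotimes_j Z_j^{g_jw_j}) C_n\bigr)$ composed with $\bigl(C_n^\dagger (\bigotimes_j Z_j^{bw_j}) C_n\bigr)$ — the first factor is $b$-independent. In the $b=0$ branch only the $b$-independent factor acts; in the $b=1$ branch both act. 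Acting on $|x\rangle$, the $b$-independent factor contributes an overall scalar $\langle x|C_n^\dagger(\bigotimes_j Z_j^{g_jw_j})C_n|x\rangle$ \emph{only after} we later project back onto $|x\rangle$; more carefully, I would keep it as an operator and observe it is the same in both branches, so after recombining the two branches with the final fan-out and Hadamard (Step~6, which uncopies the $G(j)$'s back to $|g_j\rangle$ and returns $Y$ to the basis), the amplitude to measure $Y=0$ is $\tfrac12\langle x|\bigl(I + C_n^\dagger(\bigotimes_j Z_j^{w_j})C_n\bigr)|x\rangle = (1+F_n(x,w))/2$, independently of the $g_j$. Finally, by linearity over the (arbitrary) initial state of $G(1),\ldots,G(n)$, the output distribution is unchanged, and the ancillae are returned to their initial state.

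The main obstacle is bookkeeping the two branches of $Y$ carefully enough to see that the $g_j$-dependent phase $\langle x|C_n^\dagger(\bigotimes_j Z_j^{g_jw_j})C_n|x\rangle$ is \emph{the same operator} in both branches and so does not survive as a relative phase between the $|0\rangle_Y$ and $|1\rangle_Y$ components — it is only this relative phase that controls the interference at Step~6, so a spurious global-looking factor that actually differs between branches would break the argument. The identity $(g_j\oplus b)w_j = g_jw_j \oplus bw_j$ for $w_j\in\{0,1\}$ is what guarantees the clean factorization; once that is in place, the rest is the standard Hadamard-test interference computation, essentially identical in spirit to the phase-cancellation used for $|\varphi_k\rangle$ in Lemma~\ref{correct}.
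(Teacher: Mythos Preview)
Your proposal is correct and follows essentially the same route as the paper: fix a computational-basis state $|g\rangle$ for the uninitialized ancillae, trace the six steps to obtain the two $Y$-branches carrying $C_n^\dagger(\bigotimes_j Z_j^{g_jw_j})C_n|x\rangle$ and $C_n^\dagger(\bigotimes_j Z_j^{(g_j\oplus 1)w_j})C_n|x\rangle$, then compute the probability of $Y=0$. The only difference is cosmetic algebra---you factor out the common unitary $U_g=C_n^\dagger(\bigotimes_j Z_j^{g_jw_j})C_n$ via $(g_j\oplus b)w_j=g_jw_j\oplus bw_j$ and discard it by $U_g^\dagger U_g=I$, whereas the paper keeps both branch operators and simplifies the squared sum using $Z^{a_jw_j+(a_j\oplus 1)w_j}=Z^{w_j}$; just replace ``amplitude'' by ``probability'' in your final line (the equality $\tfrac14\|(I+V)|x\rangle\|^2=\tfrac12\langle x|(I+V)|x\rangle$ holds because $V$ is a Hermitian involution).
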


This lemma immediately implies Theorem~\ref{oracle}:

\begin{proof}[Proof of Theorem~\ref{oracle}]
We replace $D_{2n}$ in the above-mentioned algorithm for MAT$(p(n),C_n)$ with 
$E_{2n}$. By Lemma~\ref{correct2}, the output probability distribution of 
$E_{2n}$ is the same as that of $D_{2n}$. Thus, as with the original algorithm, the resulting 
algorithm solves MAT$(p(n),C_n)$.
\end{proof}

\section{Limitations of Uninitialized Ancillary Qubits}\label{impossibility}

\subsection{Our Idea for Proving Theorem~\ref{cannot}}\label{idea2}

For any integer $s \geq 1$, an $s$-controlled Toffoli gate is decomposed into an 
$s$-controlled $Z$ gate sandwiched between two $H$ gates~\cite{Fang}. 
Thus, to prove Theorem~\ref{cannot}, it suffices to consider an unbounded $Z$ gate 
in place of an unbounded Toffoli gate. We assume on the contrary that there exists a 
depth-$d$ quantum circuit $C_n$ for PA$_n$ with $n$ input qubits, one output qubit, 
$p=O(\log n)$ initialized ancillary qubits, and $q=n^{O(1)}$ uninitialized ancillary qubits 
such that it consists of the gates in $\cal G$, unbounded fan-out gates on 
$(\log n)^{O(1)}$ qubits, and unbounded $Z$ gates, where $d=O((\log n)^{\delta})$ for 
some constant $0 \leq \delta <1$. When all unbounded $Z$ gates in $C_n$ act on a small 
number of qubits, such as $O(\log n)$ qubits, since $d$ is sufficiently small, the proof 
of Bera~\cite{Bera2} implies that there exists an input qubit of $C_n$ such 
that the output of $C_n$ does not depend on the input 
qubit. Thus, $C_n$ cannot compute PA$_n$ since the output of PA$_n$ 
changes if any one of the $n$ input bits changes. This contradicts the assumption.

The remaining case is when there exists an unbounded $Z$ gate on a large number of qubits. 
Let $\tilde C_n$ be the circuit obtained from $C_n$ by removing all such gates. 
Bera~\cite{Bera2} showed that, when $C_n$ does not have any ancillary qubit, 
it is well approximated by $\tilde C_n$ in the sense that, when the state of the 
input qubits is a computational basis state chosen uniformly at random, the output of 
$C_n$ coincides with that of $\tilde C_n$ with high probability. Since $C_n$ computes PA$_n$, 
$\tilde C_n$ computes PA$_n$ with high probability. Thus, we obtain a contradiction as 
in the above case since all gates in $\tilde C_n$ act on a small 
number of qubits. To apply this idea to our setting, we show that $C_n$ with $p$ initialized ancillary 
qubits and $q$ uninitialized ancillary qubits in state $|a\rangle$ for some $a \in \{0,1\}^q$ is 
well approximated (in the sense described above) by $\tilde C_n$ with the same state. 
The former circuit computes PA$_n$ since $C_n$ with an arbitrary initial state 
of the uninitialized ancillary qubits computes PA$_n$. Thus, the latter circuit computes 
PA$_n$ with high probability, and we obtain a contradiction as in the above simple case.

\subsection{Analysis of a General Circuit and Its Application}\label{analysis}

We analyze a general depth-$d$ quantum circuit $C_n$ with $n$ input qubits, one output 
qubit, and $p$ initialized and $q$ uninitialized ancillary qubits such that it 
consists of the gates in $\cal G$, unbounded fan-out gates, and unbounded $Z$ gates. 
Its key property is described as~follows:

\begin{figure}[t]
\centering
\includegraphics[scale=.32]{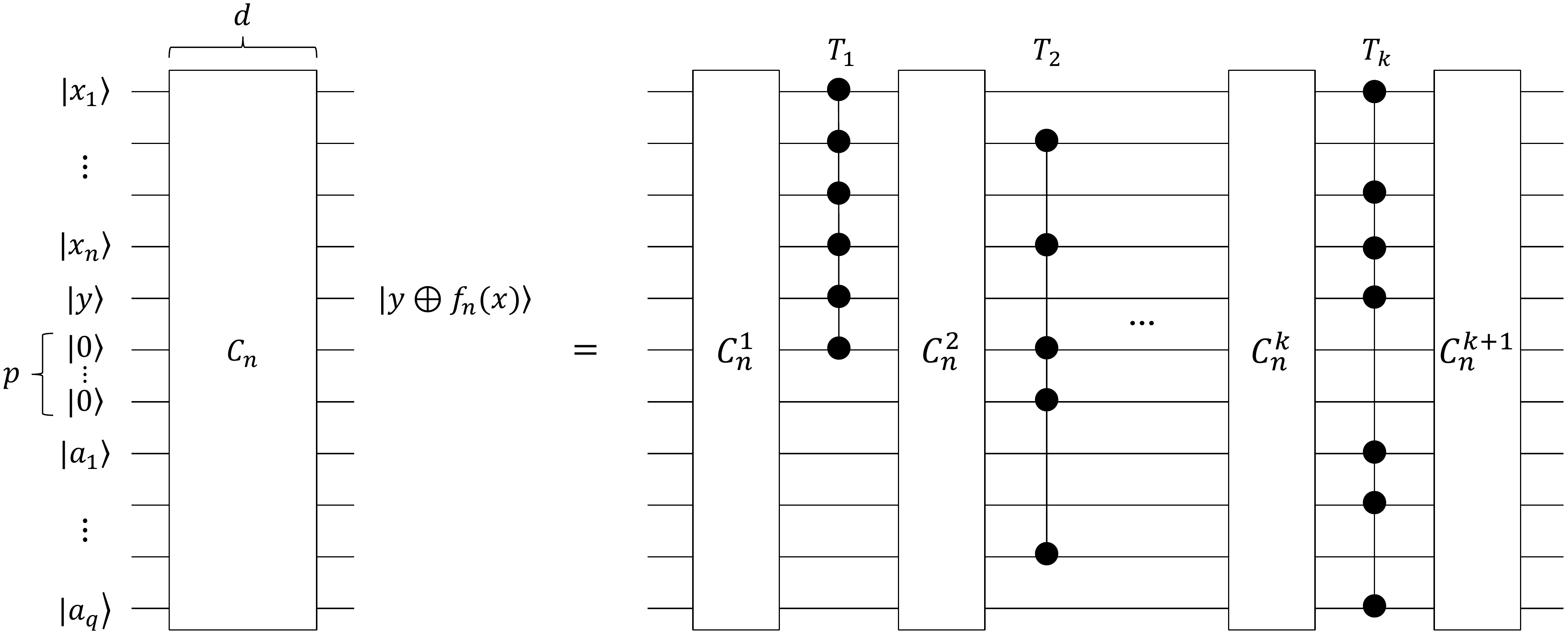}
\caption{Circuit $C_n$ for $f_n$ and its decomposition. The initial states of the input 
qubits, output qubit, and uninitialized ancillary qubits are 
$|x_1\rangle\cdots |x_n\rangle$, $|y\rangle$, and $|a_1\rangle\cdots |a_q\rangle$, respectively, 
for any $x=x_1\cdots x_n \in \{0,1\}^n$, $y \in \{0,1\}$, and $a_1\cdots a_q \in \{0,1\}^q$. 
Gates $T_1,\ldots,T_k$ are unbounded $Z$ gates.}
\label{figure5}
\end{figure}

\begin{lemma}[\cite{Bera2,Beraphd}]\label{Beralemma}
Let $C_n$ be a depth-$d$ quantum circuit with $n$ input qubits and one output qubit 
(possibly with ancillary qubits). If all gates in $C_n$ act on at most $w$ qubits, 
then the output of $C_n$ can depend only on the states of at most $w^d$ input qubits.
\end{lemma}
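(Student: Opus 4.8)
The plan is to prove Lemma~\ref{Beralemma} by a \textbf{light-cone argument} tracking, layer by layer, how many input qubits each qubit's state can depend on. Formally, fix the input qubits and label all qubits (input, output, ancillary) of $C_n$. Write $C_n = U_d U_{d-1} \cdots U_1$, where $U_\ell$ is the unitary applied in layer $\ell$. For each qubit $v$ and each layer index $\ell \in \{0,1,\dots,d\}$, I define a set $S_\ell(v)$ of input-qubit indices on which the reduced state of $v$ after applying $U_\ell \cdots U_1$ can possibly depend, starting from $S_0(v) = \{v\}$ if $v$ is an input qubit and $S_0(v) = \emptyset$ otherwise (since initialized and uninitialized ancillary qubits, and the output qubit, start in states fixed independently of $x$). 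The claim to establish is that $|S_d(v)| \le w^d$ for every $v$, from which the lemma follows by taking $v$ to be the output qubit and noting that the measurement outcome distribution is a function of that qubit's reduced state.

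The inductive step is the heart of the argument. Suppose $U_\ell$ consists of gates $g_1,\dots,g_r$ acting on pairwise disjoint qubit sets $Q_1,\dots,Q_r$, each with $|Q_i| \le w$. For a qubit $v$ that is acted on by gate $g_i$ (i.e.\ $v \in Q_i$), the new reduced state of $v$ can depend only on the joint reduced state of all qubits in $Q_i$ before the layer, hence I set $S_\ell(v) = \bigcup_{u \in Q_i} S_{\ell-1}(u)$; for a qubit $v$ not touched by layer $\ell$, I set $S_\ell(v) = S_{\ell-1}(v)$. The key bookkeeping fact is that this recursion gives $|S_\ell(v)| \le w \cdot \max_u |S_{\ell-1}(u)|$, and since the bound $\max_v |S_\ell(v)| \le w^{\ell}$ is preserved (base case $\max_v |S_0(v)| = 1 = w^0$, using $w \ge 1$), we get $|S_d(v)| \le w^d$. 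One subtlety worth stating carefully: ``the reduced state of $v$ depends only on the reduced state of $Q_i$'' is correct because $U_\ell$ factors as a tensor product over the blocks $Q_1,\dots,Q_r$ together with identity on the rest, so tracing out everything outside $Q_i$ commutes with applying $g_i$; and ``depends only on $\bigcup_{u\in Q_i}S_{\ell-1}(u)$'' because the reduced density matrix on $Q_i$ before the layer is a function of the input bits indexed by that union (anything outside it has been traced out or was never correlated with it).

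A second point needing a word of care is the treatment of ancillary qubits: I must justify $S_0(v) = \emptyset$ for them. Initialized ancillas are in $|0\rangle$, fixed; and although the \emph{uninitialized} ancillas are in an unknown state $|a\rangle$, that state is nonetheless chosen independently of the input $x$ (it is a fixed parameter of the analysis, not a function of $x$), so the whole computation is being analyzed for one fixed $a$ and no dependence on $x$ is introduced at $\ell = 0$. Hence the light cone only ever grows from the genuine input qubits, exactly as in the ancilla-free case of Bera.

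I do not expect a serious obstacle here; the argument is essentially a discrete light-cone / causal-cone count and is standard for shallow circuits. The only place demanding attention is phrasing the inductive invariant at the level of \emph{reduced states} (density matrices) rather than pure states, since with ancillary qubits present the reduced state of a single qubit is generally mixed --- but the tensor-product structure of each layer makes the partial-trace manipulations routine, and no mixedness-specific difficulty arises. Since the lemma is attributed to \cite{Bera2,Beraphd}, it would also be legitimate simply to cite it; I include the proof sketch above for completeness and because the extension to the ancilla-laden setting, while immediate, is exactly the point being used in Section~\ref{idea2}.
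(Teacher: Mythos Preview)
The paper does not prove this lemma at all; it is stated with attribution to \cite{Bera2,Beraphd} and used as a black box, with no argument given in the text or appendices. So there is nothing in the paper to compare your proof against --- you have supplied a proof where the paper simply cites one.

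Your light-cone argument is the standard one and is correct in substance. One point worth tightening: the induction hypothesis you state is phrased at the level of \emph{single-qubit} reduced states (``the reduced state of $v$ depends only on $S_{\ell-1}(v)$''), but the inductive step needs the stronger claim that the \emph{joint} reduced state of $Q_i$ depends only on $\bigcup_{u\in Q_i}S_{\ell-1}(u)$. Knowing each marginal $\rho_u$ depends only on $S_{\ell-1}(u)$ does not by itself control correlations in the joint state. The fix is to strengthen the invariant to: for every subset $T$ of qubits, the joint reduced state of $T$ after layer $\ell$ depends only on $\bigcup_{v\in T}S_\ell(v)$; this is preserved by the same recursion. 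Equivalently, run the light cone \emph{backward} from the output qubit through the $d$ layers, picking up at most a factor of $w$ per layer --- this sidesteps the joint-state issue entirely and is the cleanest formulation. Your handling of the ancillae (both initialized and uninitialized start with $S_0=\emptyset$ because their state is fixed independently of $x$) is exactly right and is, as you note, precisely the small extension over Bera's ancilla-free setting that the paper relies on.
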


Let $t \geq 2$ be an integer and ${\cal G}_t$ be the set of all unbounded $Z$ gates 
in $C_n$ that act on more than or equal to $t$ qubits. We consider the case where 
${\cal G}_t \neq \emptyset$ and assume that ${\cal G}_t =\{T_1,\ldots,T_k\}$ for some $k \geq 1$, 
where, for any $1 \leq l \leq k$, if $T_l$ is in layer $L$ of $C_n$, then $T_{l+1}$ is in 
layer $L' \geq L$. We decompose $C_n$ into the gates in ${\cal G}_t$ and the other parts 
as depicted in Fig.~\ref{figure5}, where $C_n$ computes a Boolean function $f_n$ on $n$ 
bits and $C_n^j$ is a quantum circuit consisting of gates that 
are not in ${\cal G}_t$ for any $1 \leq j \leq k+1$. Such a decomposition is not 
unique in general, but the point is to fix a decomposition. 
For any $1 \leq l \leq k$,  we define a quantum circuit $V_l$ as follows: 
$V_1=C_n^1$ and $V_l = C_n^lT_{l-1}V_{l-1}$ for any $2 \leq l \leq k$. We also define 
$\Delta_l(x,y,b) = ||T_lV_l|x\circ y \circ b\rangle - V_l|x\circ y \circ b\rangle||$ 
and $\Delta(x,y,b) = ||C_n|x\circ y \circ b\rangle - {\tilde C_n}|x\circ y \circ b\rangle||$ 
for any $x\in\{0,1\}^n$, $y\in\{0,1\}$, and $b\in\{0,1\}^{p+q}$. Here, 
the symbol ``$\circ$'' represents the concatenation of bit strings, 
$|||v\rangle||= \sqrt{\langle v|v\rangle}$ for any vector 
$|v\rangle$, and ${\tilde C_n} =C_n^{k+1}C_n^k \cdots C_n^2C_n^1$. 
Let $U_n$ be a random variable uniformly distributed over $\{0,1\}^n$.

Using the expected value ${\rm E}[\Delta_l(U_n,y,b)^2] = (1/2^n)\sum_{x\in\{0,1\}^n} \Delta_l(x,y,b)^2$, we first evaluate the probability ${\rm Pr}[\Delta(U_n,y,b) < \varepsilon]$ 
as follows. The proof can be found in Appendix~\ref{a-union}.

\begin{lemma}\label{union}
${\rm Pr}[\Delta(U_n,y,b) < \varepsilon] \geq 
1-(k^2/\varepsilon^2)\sum_{l=1}^k{\rm E}[\Delta_l(U_n,y,b)^2]$ for any 
$\varepsilon >0$, $y\in\{0,1\}$, and $b\in\{0,1\}^{p+q}$.
\end{lemma}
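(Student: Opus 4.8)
The plan is to bound the ``bad'' event $\Delta(U_n,y,b)\ge\varepsilon$ by relating $\Delta$ to the individual discrepancies $\Delta_l$ and then applying a union bound together with Markov's inequality. First I would fix $y\in\{0,1\}$ and $b\in\{0,1\}^{p+q}$ and show the deterministic inequality
\[
\Delta(x,y,b)\;\le\;\sum_{l=1}^{k}\Delta_l(x,y,b)
\qquad\text{for every }x\in\{0,1\}^n .
\]
This is a standard hybrid/telescoping argument: starting from $C_n|x\circ y\circ b\rangle = C_n^{k+1}T_kV_k|x\circ y\circ b\rangle$ and $\tilde C_n|x\circ y\circ b\rangle = C_n^{k+1}V_k\cdots$, one inserts the intermediate vectors obtained by replacing each $T_l$ by the identity one at a time, using that the remaining operators $C_n^{j}$, $T_j$ are unitary (so they preserve norm) and the triangle inequality. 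The $l$-th term of the telescoping sum is exactly $\|T_lV_l|x\circ y\circ b\rangle - V_l|x\circ y\circ b\rangle\| = \Delta_l(x,y,b)$, because the unitaries applied after the $l$-th slot cancel in norm.

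Next I would convert this into a statement about $U_n$. From the pointwise bound, the event $\{\Delta(U_n,y,b)\ge\varepsilon\}$ is contained in $\bigcup_{l=1}^{k}\{\Delta_l(U_n,y,b)\ge\varepsilon/k\}$, since if every $\Delta_l<\varepsilon/k$ then their sum, and hence $\Delta$, is $<\varepsilon$. A union bound gives
\[
{\rm Pr}[\Delta(U_n,y,b)\ge\varepsilon]\;\le\;\sum_{l=1}^{k}{\rm Pr}\!\left[\Delta_l(U_n,y,b)\ge\varepsilon/k\right].
\]
Applying Markov's inequality to the nonnegative random variable $\Delta_l(U_n,y,b)^2$ with threshold $(\varepsilon/k)^2$ yields ${\rm Pr}[\Delta_l(U_n,y,b)\ge\varepsilon/k]\le (k^2/\varepsilon^2)\,{\rm E}[\Delta_l(U_n,y,b)^2]$. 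Summing over $l$ and taking the complement gives precisely
\[
{\rm Pr}[\Delta(U_n,y,b)<\varepsilon]\;\ge\;1-\frac{k^2}{\varepsilon^2}\sum_{l=1}^{k}{\rm E}[\Delta_l(U_n,y,b)^2],
\]
as claimed.

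The only place that needs care is the telescoping step: one must be sure that the ``tail'' circuits $C_n^{k+1},\dots,C_n^{l+1}$ and the intervening $T_j$'s that sit to the \emph{left} of the swapped gate are genuinely unitary so that they drop out of the norm, and that the pieces to the \emph{right} are collected correctly into $V_l$. Since each $C_n^j$ is built from gates in $\cal G$, fan-out gates, and $Z$ gates — all unitary — and each $T_j$ is an unbounded $Z$ gate, this is immediate once the decomposition of Fig.~\ref{figure5} is fixed; there is no genuine obstacle, just bookkeeping. The remaining identity ${\rm E}[\Delta_l(U_n,y,b)^2]=(1/2^n)\sum_{x}\Delta_l(x,y,b)^2$ is just the definition of the uniform expectation and requires nothing further.
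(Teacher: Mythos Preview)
Your proposal is correct and follows essentially the same route as the paper's proof: a telescoping/hybrid bound $\Delta(x,y,b)\le\sum_{l=1}^k\Delta_l(x,y,b)$ obtained via the triangle inequality and unitary invariance of the norm, followed by the union bound and Markov's inequality applied to $\Delta_l(U_n,y,b)^2$ with threshold $(\varepsilon/k)^2$. The only cosmetic difference is that the paper removes the $T_l$'s from the largest index down, so the ``tail'' factor at step $l$ is $C_n^{k+1}\cdots C_n^{l+1}$ with no $T_j$'s left in it; your remark about intervening $T_j$'s on the left would apply to a different hybrid ordering, but since all the operators involved are unitary this makes no difference to the argument.
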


To evaluate the value $\sum_{l=1}^k{\rm E}[\Delta_l(U_n,y,b)^2]$, let $t_l$ be the number 
of qubits on which $T_l$ acts, $u_l=n+p+q+1-t_l$, and 
$t_{\rm min} = \min\{t_l|1 \leq l \leq k\}$. We assume 
that $V_l|x\circ y\circ b\rangle = 
\sum_{i\in\{0,1\}^{t_l}}\sum_{j\in\{0,1\}^{u_l}} g_{x\circ y\circ b}^{(l)}(i \circ j)|i\circ j\rangle$ 
for any $x\in\{0,1\}^n$, $y\in\{0,1\}$, and $b\in\{0,1\}^{p+q}$, 
where $g_{x\circ y\circ b}^{(l)}(i \circ j)$ is a complex number. The qubits represented by 
$i\in\{0,1\}^{t_l}$ correspond to the qubits on which $T_l$ acts. Of course, 
for any $1 \leq l \leq k$, $T_l$ does not always act on the first $t_l$ qubits in $C_n$. We 
therefore need to apply some permutation of all qubits; however, since such a permutation does 
not affect Lemma~\ref{union2}, which is the key to Theorem~\ref{cannot}, we omit it.
 
We evaluate the above value as follows. The point is that this value with some 
initial state of the uninitialized ancillary qubits is small. The proof can be found 
in Appendix~\ref{a-bunsu}.

\begin{lemma}\label{bunsu}
$\sum_{l=1}^k {\rm E}[\Delta_l(U_n,y,b)^2]
\leq k2^{p+q+3}/2^{t_{\rm min}}$ for any $y\in\{0,1\}$ and $b\in\{0,1\}^{p+q}$. Moreover, 
there exists some $a \in \{0,1\}^q$ such that $\sum_{l=1}^k 
{\rm E}[\Delta_l(U_n,0,0^p\circ a)^2] \leq k2^{p+3}/2^{t_{\rm min}}$.
\end{lemma}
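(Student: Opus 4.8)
The plan is to prove Lemma~\ref{bunsu} by first bounding each $\Delta_l(x,y,b)^2$ pointwise using the structure of $T_l$ as an unbounded $Z(\theta)$ gate, then averaging over $x$, and finally averaging over the initial state $a$ of the uninitialized ancillary qubits to extract a good choice.

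First I would fix $l$ and analyze $\Delta_l(x,y,b)^2 = \| (T_l - I) V_l |x\circ y\circ b\rangle \|^2$. Since $T_l$ is an unbounded $Z(\theta_l)$ gate on $t_l$ qubits, it acts as multiplication by $e^{i\theta_l}$ on the single basis component where all $t_l$ control qubits are $1$, and as the identity elsewhere. Writing $V_l|x\circ y\circ b\rangle = \sum_{i\in\{0,1\}^{t_l}}\sum_{j\in\{0,1\}^{u_l}} g_{x\circ y\circ b}^{(l)}(i\circ j)\,|i\circ j\rangle$, only the term with $i = 1^{t_l}$ is affected, so
\begin{equation}
\Delta_l(x,y,b)^2 = |e^{i\theta_l}-1|^2 \sum_{j\in\{0,1\}^{u_l}} |g_{x\circ y\circ b}^{(l)}(1^{t_l}\circ j)|^2 \leq 4 \sum_{j\in\{0,1\}^{u_l}} |g_{x\circ y\circ b}^{(l)}(1^{t_l}\circ j)|^2 .
\end{equation}
Now I average over $x$. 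The key observation is that $V_l$ is unitary, so for each fixed $y,b$ the vectors $\{V_l|x\circ y\circ b\rangle : x\in\{0,1\}^n\}$ are orthonormal; hence $\sum_{x\in\{0,1\}^n} |g_{x\circ y\circ b}^{(l)}(c)|^2 \leq 1$ for every fixed basis label $c\in\{0,1\}^{t_l+u_l}$ (this is just the squared norm of a column of a submatrix of a unitary). Summing this over the $2^{u_l}$ labels of the form $1^{t_l}\circ j$ gives $\sum_{x}\sum_j |g_{x\circ y\circ b}^{(l)}(1^{t_l}\circ j)|^2 \leq 2^{u_l}$, so
\begin{equation}
{\rm E}[\Delta_l(U_n,y,b)^2] = \frac{1}{2^n}\sum_{x\in\{0,1\}^n}\Delta_l(x,y,b)^2 \leq \frac{4\cdot 2^{u_l}}{2^n} = \frac{4\cdot 2^{n+p+q+1-t_l}}{2^n} = \frac{2^{p+q+3}}{2^{t_l}} \leq \frac{2^{p+q+3}}{2^{t_{\rm min}}}.
\end{equation}
Summing over $1\leq l\leq k$ yields the first claim, $\sum_{l=1}^k {\rm E}[\Delta_l(U_n,y,b)^2] \leq k2^{p+q+3}/2^{t_{\rm min}}$.

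For the second claim I would specialize $y=0$ and the initialized ancillary qubits to $0^p$, and then average over $a\in\{0,1\}^q$. By the bound just obtained applied with each $b = 0^p\circ a$,
\begin{equation}
\frac{1}{2^q}\sum_{a\in\{0,1\}^q}\sum_{l=1}^k {\rm E}[\Delta_l(U_n,0,0^p\circ a)^2] \leq \frac{1}{2^q}\cdot 2^q \cdot \frac{k2^{p+q+3}}{2^{t_{\rm min}}}
\end{equation}
would be too weak; instead I redo the averaging \emph{jointly} over $x$ and $a$. Treating $a$ as part of the "input" to the unitary $V_l$, orthonormality of $\{V_l|x\circ 0\circ 0^p\circ a\rangle : x\in\{0,1\}^n, a\in\{0,1\}^q\}$ gives $\sum_{x,a} |g_{x\circ 0\circ 0^p\circ a}^{(l)}(c)|^2 \leq 1$ for each fixed label $c$, hence $\sum_{x,a}\sum_j |g^{(l)}(1^{t_l}\circ j)|^2 \leq 2^{u_l}$, so $\frac{1}{2^{n+q}}\sum_{x,a}\Delta_l(x,0,0^p\circ a)^2 \leq 4\cdot 2^{u_l}/2^{n+q} = 2^{p+3}/2^{t_l}$. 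Averaging over $a$ the quantity $\sum_l {\rm E}[\Delta_l(U_n,0,0^p\circ a)^2]$ is therefore at most $k2^{p+3}/2^{t_{\rm min}}$, so some $a\in\{0,1\}^q$ achieves this bound, which is the second claim. The main obstacle is getting the averaging bookkeeping right: one must be careful that "column of a unitary has norm $\le 1$" is applied to the correct grouping of $x$ (and, for the second part, $x$ together with $a$) as the index set of an orthonormal family, and that the remaining coordinates $y$ and $0^p$ are held fixed rather than summed; once that is set up, the chain of inequalities is routine.
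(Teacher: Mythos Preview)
Your proof is correct and follows essentially the same approach as the paper: both compute $\Delta_l(x,y,b)^2$ as (at most) $4\sum_j |g^{(l)}_{x\circ y\circ b}(1^{t_l}\circ j)|^2$, use unitarity of $V_l$ to bound the amplitude sums, and then average over $a$ to extract a good uninitialized state. The only cosmetic difference is that the paper sums over all of $x,y,b$ to get $\sum_{x,y,b}|g^{(l)}_{x\circ y\circ b}(c)|^2 = 1$ exactly and then drops the extra terms, whereas you invoke Bessel's inequality directly on the smaller orthonormal families $\{V_l|x\circ y\circ b\rangle\}_x$ and $\{V_l|x\circ 0\circ 0^p\circ a\rangle\}_{x,a}$ to get $\leq 1$; the resulting bounds are identical.
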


Lemmas~\ref{union} and~\ref{bunsu} immediately imply the following evaluation:

\begin{lemma}\label{union2}
There exists some $a \in \{0,1\}^q$ such that ${\rm Pr}[\Delta(U_n,0,0^p\circ a) < 
\varepsilon] \geq 1- k^3 2^{p+3}/(\varepsilon^2{2^{t_{\rm min}}})$ for any $\varepsilon >0$. 
\end{lemma}

Lemmas~\ref{Beralemma} and~\ref{union2} imply Theorem~\ref{cannot} as follows:

\begin{proof}[Proof of Theorem~\ref{cannot}]

We assume on the contrary that there exists a quantum circuit $C_n$ for PA$_n$ 
described in Section~\ref{idea2}. Since $p=O(\log n)$, there exists a constant $c > 0 $ 
such that $p \leq c \log n$ when $n$ is sufficiently large. We define $t=(c+4)\log (n+p+q+1)$ 
and consider ${\cal G}_t$ described above. When ${\cal G}_t = \emptyset$, all gates in 
$C_n$ act on at most $w = (\log n)^{O(1)}$ qubits. By Lemma~\ref{Beralemma}, the output 
of $C_n$ can depend only on the states of at most $w^d=o(n)$ input qubits. Thus, 
there exists an input qubit of $C_n$ such that the output of $C_n$ does not depend on 
the input qubit. This yields a contradiction as described in Section~\ref{idea2}. 

We consider the remaining case where ${\cal G}_t \neq \emptyset$. In this case, 
we apply the above analysis of a general circuit. 
It holds that $p\leq c\log n$, $k\leq (n+p+q+1)d/t_{\rm min}$, and 
$t_{\rm min} \geq (c+4)\log (n+p+q+1)$. Thus, by Lemma~\ref{union2} with $\varepsilon=0.1$, 
\begin{align*}
{\rm Pr}[\Delta(U_n,0,0^{p}\circ a) < 0.1] 
\geq  1-\left( \frac{d}{(c+4)\log (n+p+q+1)} \right)^3\frac{800n^c}{(n+p+q+1)^{c+1}}
\end{align*}
for some $a \in \{0,1\}^q$. Let us express this value on the right-hand side 
by $1-\gamma$. Thus, there exists a set 
$S\subseteq \{0,1\}^n$ such that $S$ has at least $2^n(1-\gamma)$ 
elements and, for any $x \in S$, $\Delta(x,0,0^{p}\circ a) < 0.1$. Since 
$\gamma$ goes to 0 as $n$ goes to infinity, 
$2^n(1-\gamma) > 2^{n-1}$ when $n$ is sufficiently large. A simple calculation shows that, 
for any $x \in \{0,1\}^n$ satisfying $\Delta(x,0,0^{p}\circ a) < 0.1$, the 
output of ${\tilde C_n}|x\circ 0 \circ 0^{p}\circ a\rangle$ coincides with 
that of $C_n|x\circ 0 \circ 0^{p}\circ a\rangle$ with probability of 
at least $1-0.1^2 = 0.99$~\cite{Bera2,Beraphd}. When the initial state of the 
uninitialized ancillary qubits is $|a\rangle$, $C_n$ computes PA$_n$. Thus, for any $x \in S$, 
the output of ${\tilde C_n}|x\circ 0 \circ 0^{p}\circ a\rangle$ is PA$_n(x)$ with probability of 
at least $0.99$. This contradicts the fact obtained by the following argument. 
Since all gates in $\tilde C_n$ act on at 
most $(\log n)^{O(1)}$ qubits, as described for the case 
where ${\cal G}_t =\emptyset$, by Lemma~\ref{Beralemma}, there exists an input qubit of 
$\tilde C_n$ such that the output of $\tilde C_n$ does not 
depend on the input qubit. This implies that, for at most $2^{n-1}$ 
elements $x\in\{0,1\}^n$, the output of ${\tilde C_n}|x\circ 0 \circ 0^{p}\circ a\rangle$ 
is PA$_n(x)$ with probability greater than 0.5.
\end{proof}

\section{Open Problems}

Interesting challenges would be to further study the computational power of shallow
quantum circuits with uninitialized ancillary qubits. We give some examples of such problems:
\begin{itemize}
\item Can we decrease the depth of the circuit in Theorem~\ref{can}?

\item What (non-symmetric) functions can be computed by shallow 
quantum circuits with uninitialized ancillary qubits?

\item What is the relationship between the computational power of 
shallow quantum circuits with uninitialized ancillary qubits and that of 
general classical/quantum circuits?
\end{itemize}

\bibliography{mybib}

\appendix

\section{Proofs}

\subsection{Proof of Lemma~\ref{correct}}\label{a-correct}

\begin{proof}
As an example, Stages 1 and 2 with $n=3$ are depicted in 
Figs.~\ref{figure2} and~\ref{figure3}, respectively. The states of $X_1,\ldots,X_n$ stay unchanged 
during the computation since the qubits are used only for control qubits. 
We fix an arbitrary $1 \leq k \leq m$ and show the lemma by induction on $s$. 
We first consider the base case, $s=1$. Steps 1--2 transform the initial state of 
$I(k)$ and $B_1(k,1),\ldots,B_n(k,1)$, which is $|0\rangle |b_1(k,1)\rangle\cdots |b_n(k,1)\rangle$, 
into the state 
\begin{equation}\label{sta0}
\frac{1}{\sqrt{2}} |0\rangle |b_1(k,1)\rangle\cdots |b_n(k,1)\rangle
+ \frac{1}{\sqrt{2}} |1\rangle |b_1(k,1)\oplus 1 \rangle\cdots |b_n(k,1)\oplus 1\rangle.
\end{equation}
Step 3 does nothing and 
Step 4 transforms the states of $A_1(k),\ldots,A_n(k)$ into the states 
$|x_1\oplus a_1(k)\rangle,\ldots,|x_n\oplus a_n(k)\rangle$, respectively. 
Step 5 transforms state (\ref{sta0}) into the state
\begin{equation}\label{sta1}
\frac{1}{\sqrt{2}} |0\rangle |b_1(k,1)\rangle\cdots |b_n(k,1)\rangle + 
\frac{e^{\frac{2\pi i}{2^k}\alpha(k,1)}}{\sqrt{2}}|1\rangle 
|b_1(k,1)\oplus 1 \rangle\cdots |b_n(k,1)\oplus 1\rangle,
\end{equation}
where we ignore the global phase and
\begin{align*} 
\alpha(k,1)=\sum_{j=1}^n (-1)^{b_j(k,1)}(x_j \oplus a_j(k)).
\end{align*} 
Step 6 transforms the states of $A_1(k),\ldots,A_n(k)$ into the states 
$|a_1(k)\rangle,\ldots,|a_n(k)\rangle$, respectively. Step~7 transforms 
state (\ref{sta1}) into the state
\begin{align*}
\frac{1}{\sqrt{2}} |0\rangle |b_1(k,1)\rangle\cdots |b_n(k,1)\rangle + 
\frac{e^{\frac{2\pi i}{2^k}(\alpha(k,1)+\beta(k,1))}}{\sqrt{2}}|1\rangle 
|b_1(k,1)\oplus 1 \rangle\cdots |b_n(k,1)\oplus 1\rangle,
\end{align*}
where
\begin{align*}
\beta(k,1)=-\sum_{j=1}^n(-1)^{b_j(k,1)}a_j(k).
\end{align*} 
Thus, $\alpha(k,1) + \beta(k,1)$ is equal to the following value:
\begin{align*}
& \sum_{j=1}^n(-1)^{b_j(k,1)}((x_j \oplus a_j(k))-a_j(k))
=\sum_{j=1}^n (-1)^{b_j(k,1)}(-1)^{a_j(k)}x_j\\
& =\sum_{j=1}^n x_j(1 - 2 (a_j(k) \oplus b_j(k,1)))
=|x| - 2\sum_{j=1}^nx_j(a_j(k) \oplus b_j(k,1)) = \gamma(k,1).
\end{align*}
Steps 8 transforms the state of all the qubits other than $I(k)$ into their initial state. 
The state of $I(k)$ is
\begin{equation}\label{after}
\frac{|+\rangle + e^{\frac{2\pi i}{2^k}\gamma(k,1)}|-\rangle}{\sqrt{2}}.
\end{equation}
The state $|\varphi_2'\rangle$ in Figs.~\ref{figure2} and~\ref{figure3} is state (\ref{after}) 
when $n=3$ and $k=2$. This completes the proof of the base case. In particular, the above 
proof implies that Lemma~\ref{correct} holds when $k=1$, and thus 
we assume that $k \geq 2$ in the following.

\begin{figure}
\centering
\includegraphics[scale=.32]{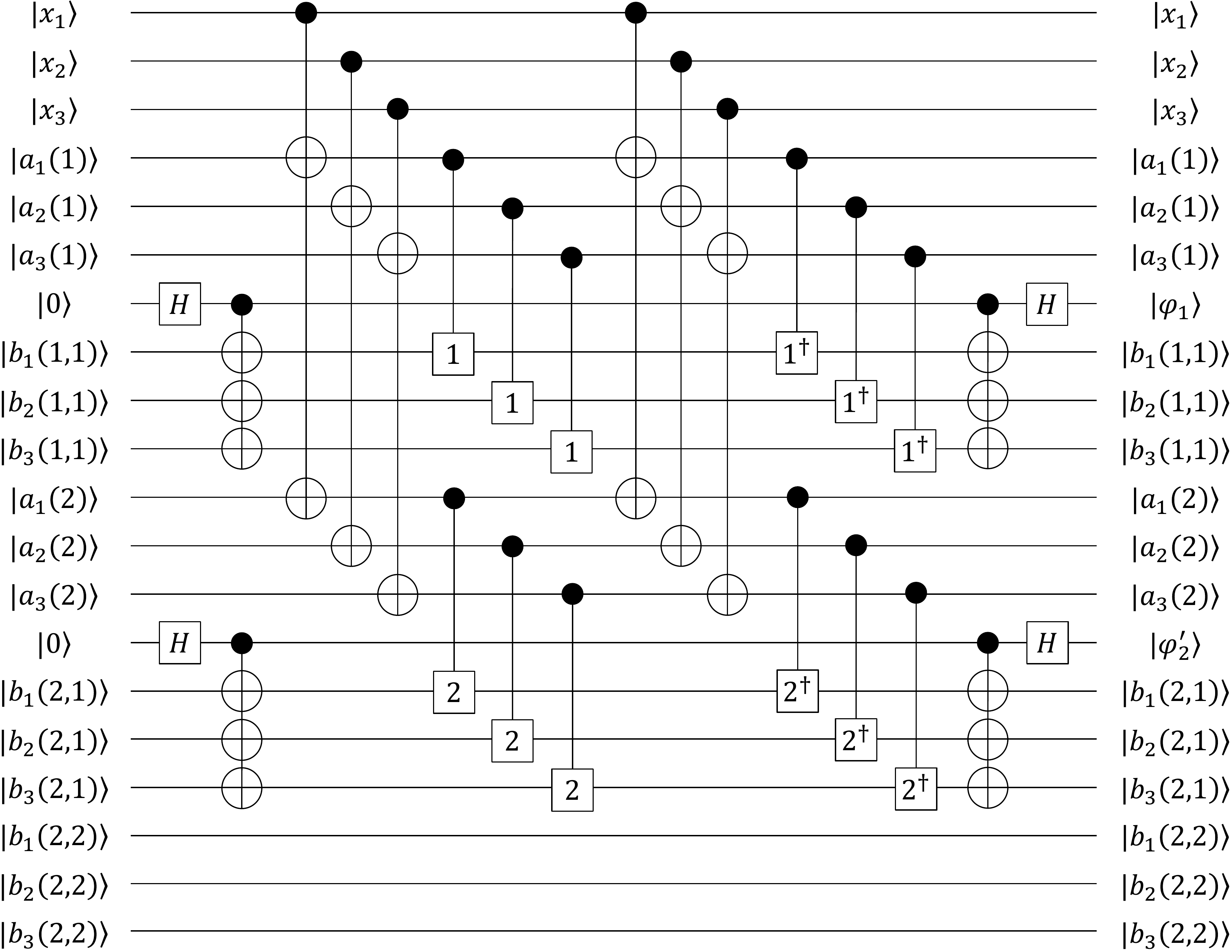}
\caption{Stage 1.}
\label{figure2}
\end{figure}
\begin{figure}
\centering
\includegraphics[scale=.32]{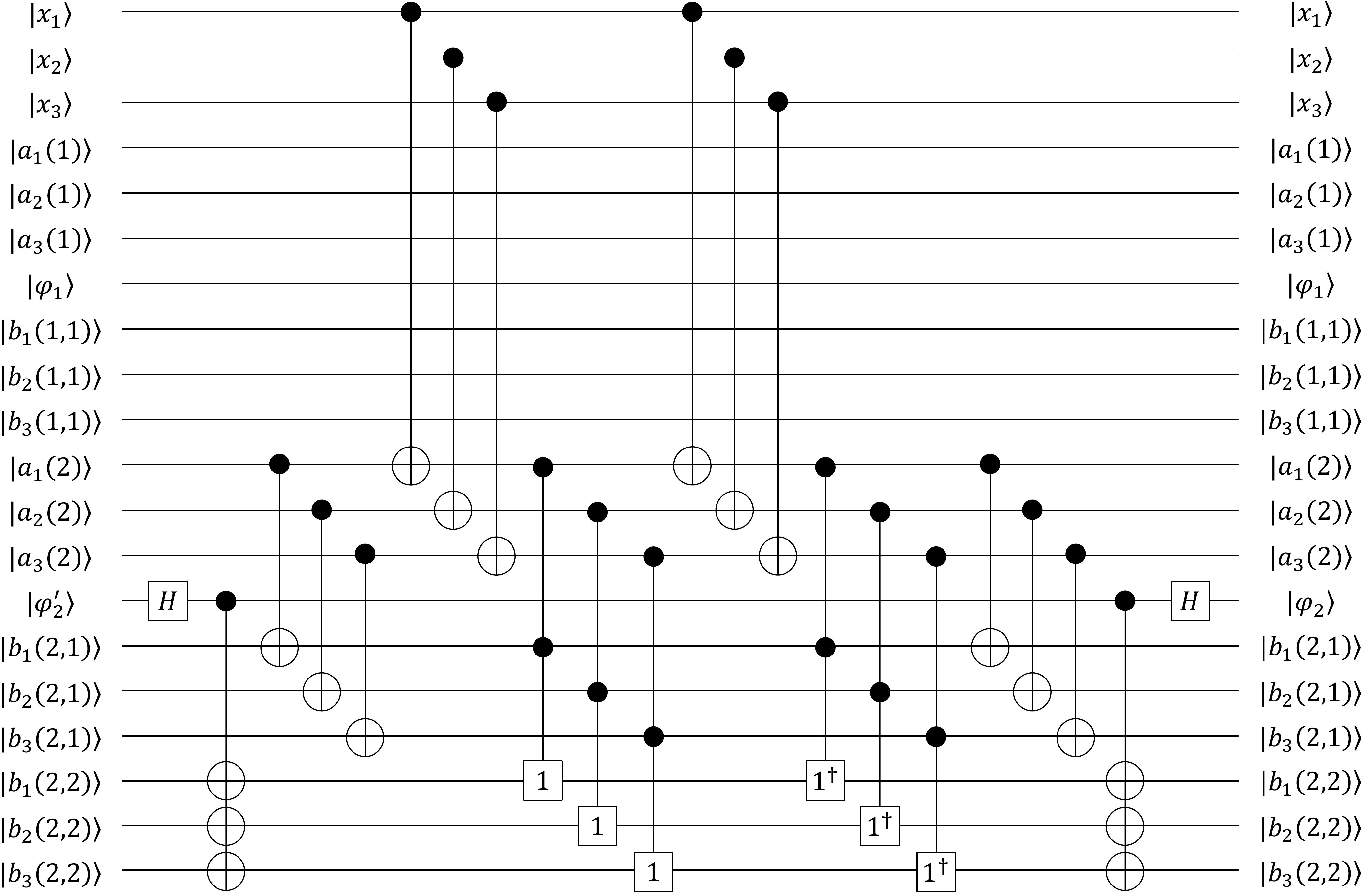}
\caption{Stage 2.}
\label{figure3}
\end{figure}

We fix an arbitrary $1 \leq s \leq k-1$ and assume that the lemma holds for $s$. 
Thus, after Stage $s$, the state of $I(k)$ is
\begin{align*}
\frac{|+\rangle + e^{\frac{2\pi i}{2^k}\gamma(k,s)}|-\rangle}{\sqrt{2}}.
\end{align*}
Moreover, for any qubit other than the initialized ancillary qubits, the state of the qubit is the 
same as its initial state. We apply Stage $s+1 \leq k$. The current state of $I(k)$ and 
$B_1(k,s+1),\ldots,B_n(k,s+1)$ is
\begin{align*}
\frac{|+\rangle + e^{\frac{2\pi i}{2^k}\gamma(k,s)}|-\rangle}{\sqrt{2}}
|b_1(k,s+1)\rangle\cdots |b_n(k,s+1)\rangle
\end{align*}
and Steps 1--2 transform this state into the state
\begin{equation}\label{sta2}
\frac{1}{\sqrt{2}} |0\rangle |b_1(k,s+1)\rangle\cdots |b_n(k,s+1)\rangle + 
\frac{e^{\frac{2\pi i}{2^k}\gamma(k,s)}}{\sqrt{2}} |1\rangle 
|b_1(k,s+1)\oplus 1 \rangle\cdots |b_n(k,s+1)\oplus 1\rangle.
\end{equation}
Step 3 transforms the states of $B_j(k,1),\ldots,B_j(k,s)$ 
into the states
\begin{align*}
|a_j(k)\oplus b_j(k,1)\rangle,\ldots,|a_j(k)\oplus b_j(k,s)\rangle,
\end{align*} 
respectively, for any $1 \leq j \leq n$. Step 4 transforms the states of 
$A_1(k),\ldots,A_n(k)$ into the states $|x_1 \oplus a_1(k)\rangle,\ldots,|x_n \oplus a_n(k)\rangle$, 
respectively. Step 5 transforms state (\ref{sta2}) into the state
\begin{equation}\label{sta3}
\frac{1}{\sqrt{2}} |0\rangle |b_1(k,s+1)\rangle\cdots |b_n(k,s+1)\rangle + 
\frac{e^{\frac{2\pi i}{2^k}(\gamma(k,s)+\alpha(k,s+1))}}{\sqrt{2}} |1\rangle 
|b_1(k,s+1)\oplus 1 \rangle\cdots |b_n(k,s+1)\oplus 1\rangle,
\end{equation}
where
\begin{align*}
\alpha(k,s+1) 
=2^s\sum_{j=1}^n (-1)^{b_j(k,s+1)}(x_j\oplus a_j(k))\bigwedge_{l=1}^{s}  (a_j(k) \oplus b_j(k,l)).
\end{align*} 
Step 6 transforms the states of 
$A_1(k),\ldots,A_n(k)$ into the states $|a_1(k)\rangle,\ldots,|a_n(k)\rangle$, 
respectively. Step~7 transforms state (\ref{sta3}) into the state
\begin{align*}
\frac{1}{\sqrt{2}} |0\rangle & |b_1(k,s+1)\rangle \cdots |b_n(k,s+1)\rangle\\
& + \frac{e^{\frac{2\pi i}{2^k}(\gamma(k,s)+\alpha(k,s+1)+\beta(k,s+1))}}{\sqrt{2}} |1\rangle 
|b_1(k,s+1)\oplus 1 \rangle\cdots |b_n(k,s+1)\oplus 1\rangle,
\end{align*}
where
\begin{align*}
\beta(k,s+1) =-2^s\sum_{j=1}^n(-1)^{b_j(k,s+1)}a_j(k)\bigwedge_{l=1}^{s}  (a_j(k) \oplus b_j(k,l)).
\end{align*} 
Thus, $\alpha(k,s+1) + \beta(k,s+1)$ is equal to the following value:
\begin{align*}
& 2^s\sum_{j=1}^n (-1)^{b_j(k,s+1)}((x_j \oplus a_j(k))-a_j(k))
\bigwedge_{l=1}^{s}  (a_j(k) \oplus b_j(k,l))\\
&=2^s\sum_{j=1}^n (-1)^{b_j(k,s+1)}(-1)^{a_j(k)}x_j\bigwedge_{l=1}^{s}  (a_j(k) \oplus b_j(k,l))\\
&=2^s\sum_{j=1}^n x_j(1 - 2 (a_j(k) \oplus b_j(k,s+1)))\bigwedge_{l=1}^{s}  (a_j(k) \oplus b_j(k,l))\\
&=2^s\sum_{j=1}^n x_j\bigwedge_{l=1}^{s}  (a_j(k) \oplus b_j(k,l))
- 2^{s+1}\sum_{j=1}^nx_j\bigwedge_{l=1}^{s+1}  (a_j(k) \oplus b_j(k,l))\\
&=|x|-\gamma(k,s)- 2^{s+1}\sum_{j=1}^nx_j\bigwedge_{l=1}^{s+1}  (a_j(k) \oplus b_j(k,l))
=\gamma(k,s+1)-\gamma(k,s).
\end{align*}
Thus, $\gamma(k,s)+\alpha(k,s+1) + \beta(k,s+1)= \gamma(k,s+1)$. 
Step 8 transforms the state of all the qubits other than $I(k)$ into their initial state. 
The state of $I(k)$ is
\begin{align*}
\frac{|+\rangle + e^{\frac{2\pi i}{2^k}\gamma(k,s+1)}|-\rangle}{\sqrt{2}}.
\end{align*}
Therefore, the lemma holds for $s+1$ as desired. We note that
\begin{align*}
e^{\frac{2\pi i}{2^k}\gamma(k,k)} = e^{\frac{2\pi i}{2^k}(|x| - 2^k \sum_{j=1}^n
x_j \bigwedge_{l=1}^{k} (a_j(k)\oplus b_j(k,l)))} = e^{\frac{2\pi i}{2^k}|x|}.
\end{align*}
This completes the proof of the lemma.
\end{proof}

\subsection{Proof of Lemma~\ref{complexity}}\label{a-complexity}

\begin{proof}
We first decompose a $k$-controlled $Z(\pm 2\pi/2^t)$ gate in Steps 5 and 7 into the gates 
in $\cal G$ for any integers $k \geq 2$ and $t \geq 1$. Decomposing each gate simply by the 
standard method~\cite{Barenco} does not yield a quantum circuit with 
the desired complexity and thus we use a structure of these steps. These steps can be 
considered as a parallel application of fan-out gates (Step 6) preceded and followed by controlled 
phase-shift gates (Steps 5 and 7). We focus on a part of the structure, which can be 
represented as a fan-out gate preceded by a $k$-controlled $Z(2\pi/2^t)$ gate and followed by 
a $k$-controlled $Z(-2\pi/2^t)$ gate. As an example, when $k=8$, the circuit 
is the leftmost one depicted in Fig.~\ref{figure4}. By 
the standard decomposition method, a $k$-controlled $Z(2\pi/2^t)$ gate is 
decomposed into a $(k-1)$-controlled $Z(2\pi/2^{t+1})$ gate, two $(k-1)$-controlled Toffoli 
gates, a 1-controlled $Z(2\pi/2^{t+1})$ gate, and a 1-controlled $Z(-2\pi/2^{t+1})$ gate. 
Similarly, a $k$-controlled $Z(-2\pi/2^t)$ gate is decomposed into a 
$(k-1)$-controlled $Z(-2\pi/2^{t+1})$ gate, two $(k-1)$-controlled Toffoli 
gates, a 1-controlled $Z(2\pi/2^{t+1})$ gate, and a 1-controlled $Z(-2\pi/2^{t+1})$ gate. 
The resulting circuit with $k=8$ is the middle one depicted in Fig.~\ref{figure4}. In 
the resulting circuit, the $(k-1)$-controlled $Z(2\pi/2^{t+1})$ gate is canceled out by 
the $(k-1)$-controlled $Z(-2\pi/2^{t+1})$ gate. Similarly, a $(k-1)$-controlled Toffoli 
gate is canceled out by another one. Thus, the remaining gates 
(other than the fan-out gate) are two $(k-1)$-controlled Toffoli gates, 
two 1-controlled $Z(2\pi/2^{t+1})$ gates, and two 1-controlled $Z(-2\pi/2^{t+1})$ gates. 
The final circuit with $k=8$ is the rightmost one depicted in Fig.~\ref{figure4}. For any 
integer $r \geq 1$, a $1$-controlled $Z(2\pi/2^r)$ gate is decomposed into the gates in 
$\cal G$ as depicted in Fig.~\ref{figure0}(a). A $1$-controlled $Z(-2\pi/2^r)$ gate is 
decomposed similarly. Thus, the remaining problem is to decompose 
an unbounded Toffoli gate and an unbounded fan-out gate.

A $k$-controlled Toffoli gate is decomposed into an $O(k)$-depth quantum 
circuit with an uninitialized ancillary qubit such that it consists of $H$ gates, 
$Z(\pm\pi/4)$ gates, and CNOT gates \cite{Barenco,Nielsen}. Moreover, on the basis of 
the fact that a fan-out gate on $k+1$ qubits is equivalent to a gate for computing PA$_k$ 
sandwiched between two layers of $H$ gates~\cite{Green}, it 
is decomposed into an $O(\log k)$-depth quantum circuit without using any new ancillary 
qubit such that it consists of CNOT gates. An example of such a circuit with $k=4$ is depicted in 
Fig.~\ref{figure0}(b). By these decompositions, we can regard $Q_n$ as a 
circuit consisting of the gates in $\cal G$. The depth of each stage is 
$O(\log n)$ since an unbounded fan-out gate acts on at most $n+1=O(n)$ qubits and 
an unbounded Toffoli gate acts on at most $m+1=O(\log n)$ qubits. Thus, the depth of 
the whole circuit is $O(m\log n)=O((\log n)^2)$. Moreover, 
it uses $m=O(\log n)$ initialized ancillary qubits and $nm(m+3)/2=O(n(\log n)^2)$ 
uninitialized ancillary qubits. 
\end{proof}

\begin{figure}[t]
\centering
\includegraphics[scale=.32]{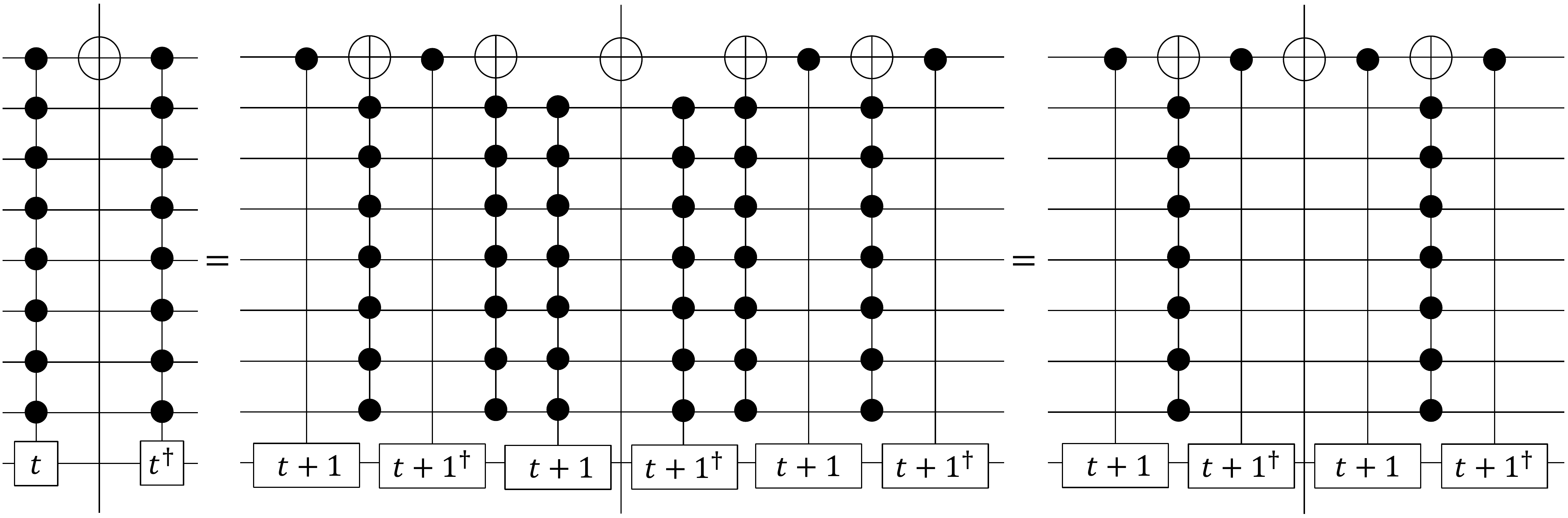}
\caption{Decomposition of the leftmost circuit consisting of a fan-out gate, an 
8-controlled $Z(2\pi/2^t)$ gate, and an 8-controlled $Z(-2\pi/2^t)$ gate. 
The gate between these controlled phase-shift gates represents a part of the 
fan-out gate. The middle circuit is obtained by the standard method~\cite{Barenco} for 
decomposing the controlled phase-shift gates. 
The rightmost circuit is obtained from the middle one since 
the 7-controlled $Z(2\pi/2^{t+1})$ gate is canceled out by the 7-controlled 
$Z(-2\pi/2^{t+1})$ gate and a 7-controlled Toffoli gate is canceled out by another one.}
\label{figure4}
\end{figure}
\begin{figure}[t]
\centering
\includegraphics[scale=.32]{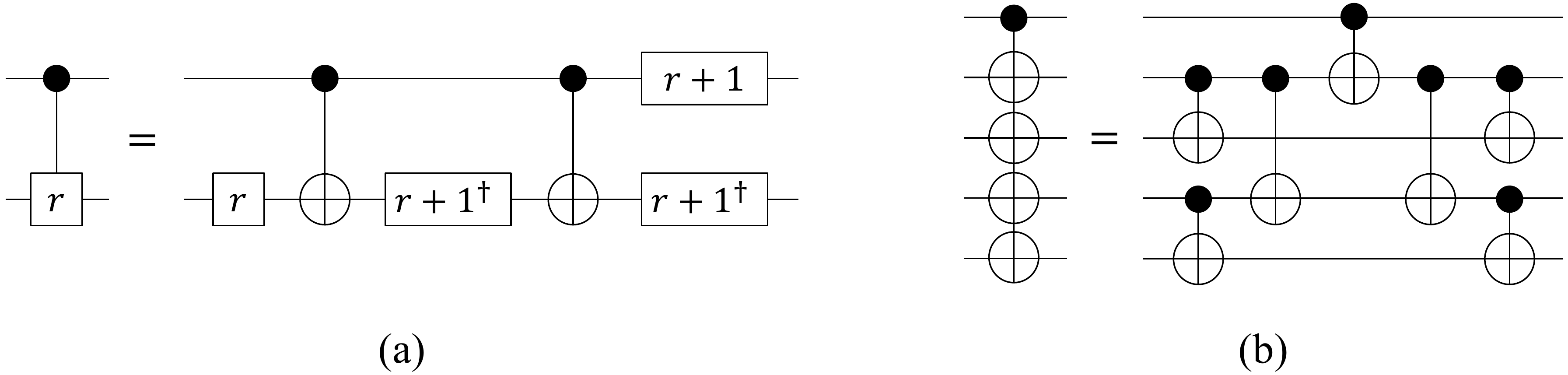}
\caption{(a): Decomposition of a 1-controlled $Z(2\pi/2^r)$ gate for any integer 
$r \geq 1$. (b): Decomposition of a fan-out gate on five qubits.}
\label{figure0}
\end{figure}

\subsection{Construction of $R_m$ and the Proof of Lemma~\ref{exp}}\label{a-exp}

We first describe the construction of $R_m$ with input $s=s_1\cdots s_m\in\{0,1\}^m$ under the 
assumption that we have a gate, which we call a PARITY$(m)$ gate, that implements the 
operation on $2^m+m-1$ qubits defined as
\begin{align*}
|s\rangle\bigotimes_{t\in\{0,1\}^m \setminus \{0^m\}}|a_t\rangle \mapsto 
|s\rangle\bigotimes _{t\in\{0,1\}^m \setminus \{0^m\}}|a_t \oplus \bigoplus_{k=1}^mt_ks_k\rangle
\end{align*}
for any $s=s_1\cdots s_m\in\{0,1\}^m$ and $a_t\in\{0,1\}$. This gate will be decomposed 
into the gates in $\cal G$ later. We prepare $m$ input qubits $S_1,\ldots,S_m$ and one 
output qubit $Y$, where $S_k$ is 
initialized to $|s_k\rangle$ and $Y$ is initialized to $|y\rangle$ for any $y \in \{0,1\}$. 
We also prepare $(m+1)(2^m-1)$ uninitialized ancillary qubits, 
which are divided into two groups, $A$ and $B$. Group $A$ consists of $2^m-1$ qubits, 
each of which is represented as $A_t$ for any $t\in\{0,1\}^m \setminus \{0^m\}$, where 
the initial state of $A_t$ is $|a_t\rangle$ for any (unknown) $a_t \in\{0,1\}$. Group $B$ 
consists of $m(2^m-1)$ qubits, which are divided into 
$m$ groups $B(1),\ldots,B(m)$. Each $B(l)$ consists of $2^m-1$ qubits, each of which 
is represented as $B_t(l)$ for any $t\in\{0,1\}^m \setminus \{0^m\}$, where the initial 
state of $B_t(l)$ is $|b_t(l)\rangle$ for any (unknown) $b_t(l)\in\{0,1\}$. 

\begin{figure}[t]
\centering
\includegraphics[scale=.32]{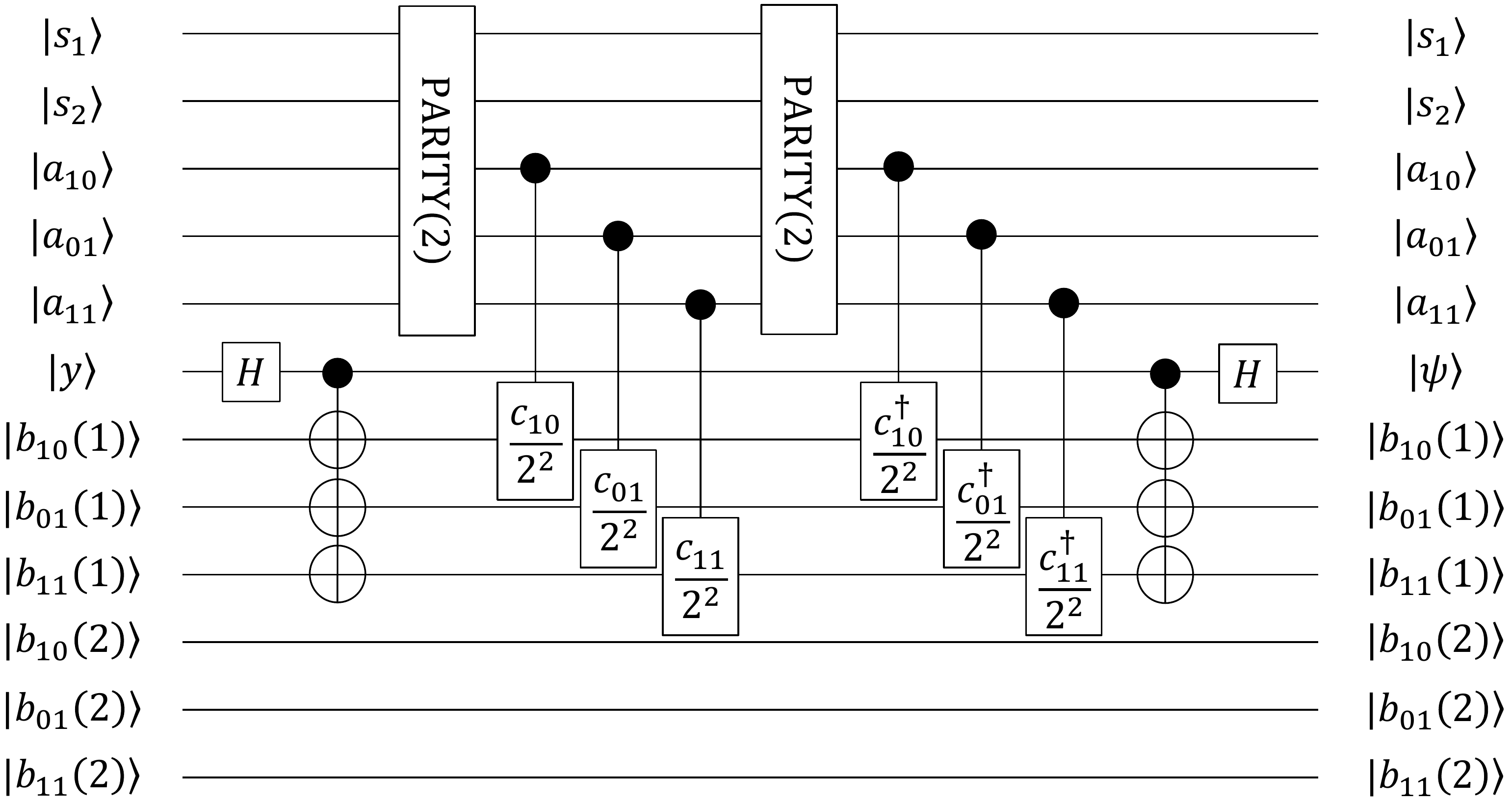}
\caption{Stage 1 of $R_2$. For any $t \in \{10,01,11\}$ and integer $l \geq 1$, 
the gates $c_t/2^l$ and $c_t^{\dagger}/2^l$ represent a 
$Z(2\pi c_t/2^l)$ gate and its inverse, i.e., a $Z(-2\pi c_t/2^l)$ gate, respectively.}
\label{figure7}
\end{figure}
\begin{figure}
\centering
\includegraphics[scale=.32]{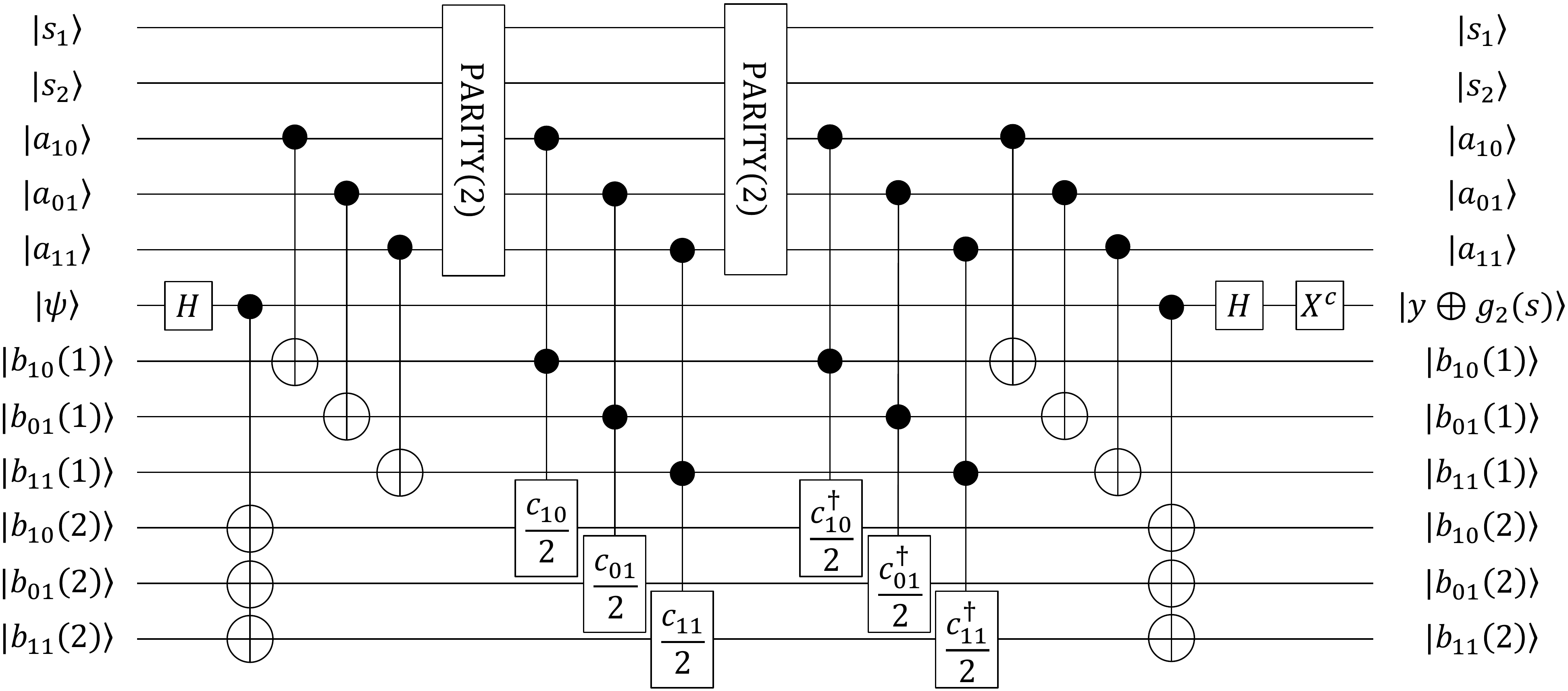}
\caption{Stage 2 of $R_2$ followed by the final $X^c$ gate, where $c=g_2(0^2)$.}
\label{figure8}
\end{figure}

The circuit $R_m$ consists of $m$ stages and a final gate after Stage $m$. As an example, 
Stages 1 and 2 followed by the final gate with $m=2$ are depicted in 
Figs.~\ref{figure7} and~\ref{figure8}, respectively. In these figures,
\begin{align*}
|\psi\rangle = X^y \left(\frac{|+\rangle + e^{\frac{2\pi i}{2^2}\gamma}|-\rangle}{\sqrt{2}}\right),
\end{align*}
where
\begin{align*}
\gamma = \sum_{t\in\{0,1\}^2 \setminus \{0^2\}}c_t\bigoplus_{k=1}^2t_ks_k
- 2\sum_{t\in\{0,1\}^2 \setminus \{0^2\}} c_t\bigoplus_{k=1}^2t_ks_k(a_t\oplus b_t(1)).
\end{align*} 
For any $1\leq u \leq m$, Stage $u$ is defined as follows:
\begin{enumerate}
\item Apply a $H$ gate to $Y$.

\item Apply a fan-out gate on $2^m$ qubits to $Y$ and all $B_t(u)$'s with 
$t\in\{0,1\}^m \setminus \{0^m\}$, where $Y$ is the control qubit.

\item If $u \geq 2$, then apply a fan-out gate on $u$ qubits to 
$B_t(1),\ldots,B_t(u-1)$, and $A_t$ for every $t\in\{0,1\}^m \setminus \{0^m\}$ in parallel, 
where $A_t$ is the control qubit.

\item Apply a PARITY$(m)$ gate on $2^m+m-1$ qubits to 
$S_1,\cdots,S_m$, and all $A_t$'s with $t\in\{0,1\}^m \setminus \{0^m\}$.

\item Apply a $u$-controlled $Z(2\pi c_t/2^{m-u+1})$ gate to $B_t(u)$ and the following 
qubits for every $t\in\{0,1\}^m \setminus \{0^m\}$ in parallel: 
$A_t$ if $u=1$ and $B_t(1),\ldots,B_t(u-1)$, and $A_t$ otherwise.

\item Apply the gate in Step 4.

\item Apply the inverse of the gates in Step 5.

\item Apply the gates in Step 3, Step 2, and Step 1 (in this order).
\end{enumerate}
After Stage $m$, we apply an $X^{g_m(0^m)}$ gate to $Y$. The proof of Lemma~\ref{exp} is 
as follows:

\begin{proof}
As with the proof of Lemma~\ref{correct}, a direct calculation shows that $R_m$ 
transforms the initial state of the output qubit into the state $|y\oplus g_m(s)\rangle$ and 
the output state of all the other qubits is the same as their initial state. Thus, 
$R_m$ computes $g_m$. Moreover, as with the proof of Lemma~\ref{complexity}, 
we can regard each stage of $R_m$ as an $O(m)$-depth circuit consisting of the gates in 
$\cal G$ and two PARITY$(m)$ gates. 

The remaining problem is to decompose the PARITY$(m)$ gate into the gates in $\cal G$. 
We construct an $O(m)$-depth quantum circuit for the PARITY$(m)$ gate using 
$m2^{m-1}$ uninitialized ancillary qubits. To describe the circuit, we use a 
parity gate. Here, for any integer $l\geq 1$, a parity gate on $l+1$ qubits implements the 
operation on $l+1$ qubits defined as
\begin{align*}
\left(\bigotimes_{j=1}^{l}|x_j\rangle\right)|y\rangle \mapsto 
\left(\bigotimes_{j=1}^{l}|x_j\rangle\right)|y \oplus \bigoplus_{j=1}^l x_j\rangle
\end{align*}
for any $y,x_j \in \{0,1\}$. The last input qubit is called the target qubit. Using a binary 
tree structure, we can regard this gate as an $O(\log l)$-depth circuit consisting 
of CNOT gates (although an $O(l)$-depth circuit is sufficient in the following argument). 
As in the construction of $R_m$, we have $m$ input qubits 
$S_1,\ldots,S_m$ and $2^m-1$ qubits, each of which is represented as $A_t$ in state 
$|a_t\rangle$. We also prepare $m2^{m-1}$ uninitialized ancillary 
qubits, which are divided into $m$ groups $D_1,\ldots,D_m$. Each $D_k$ consists of 
$2^{m-1}$ qubits, each of which is represented as $D_k(w)$ for any 
$w=w_1\cdots w_m \in \{0,1\}^m$ such that $w_k=1$, where the initial state of 
$D_k(w)$ is $|d_k(w)\rangle$ for any (unknown) $d_k(w) \in \{0,1\}$. 
As an example, when $m=2$, the circuit is depicted in Fig.~\ref{figure9}. The circuit 
for the PARITY$(m)$ gate is described as follows:
\begin{enumerate}
\item Apply a fan-out gate on $2^{m-1}+1$ qubits to $S_k$ and all $D_k(w)$'s with 
$w \in \{0,1\}^m$ such that $w_k=1$, 
for any $1 \leq k \leq m$ in parallel, where $S_k$ is the control qubit.

\item Apply a parity gate on $|t|+1$ qubits to $D_{k_1}(t),\ldots, D_{k_{|t|}}(t)$, and 
$A_t$ for any $t\in\{0,1\}^m \setminus \{0^m\}$ in parallel, where 
$A_t$ is the target qubit and $k_l$ is the position in $t$ such that $t_{k_l}=1$.

\item Apply the gates in Step 1 and Step 2 (in this order).
\end{enumerate}
Step 1 transforms the initial state of $D_k(w)$ into the state $|d_k(w)\oplus s_k\rangle$.  
Step 2 transforms the state of $A_t$ into the state
\begin{align*} 
|a_t \oplus \bigoplus_{k=1}^mt_ks_k \oplus \bigoplus_{l=1}^{|t|} d_{k_l}(t)\rangle.
\end{align*} 
Step 3 returns the state of $D_k(w)$ into its initial state and then transforms 
the state of $A_t$ into the state
\begin{align*}
|a_t \oplus \bigoplus_{k=1}^mt_ks_k \oplus 
\bigoplus_{l=1}^{|t|} d_{k_l}(t) \oplus \bigoplus_{l=1}^{|t|} d_{k_l}(t)\rangle 
=|a_t \oplus \bigoplus_{k=1}^mt_ks_k\rangle
\end{align*} 
as desired. A parity gate on $m+1$ qubits is decomposed into an $O(\log m)$-depth circuit. 
A fan-out gate on $2^{m-1}+1$ qubits is decomposed into an $O(m)$-depth circuit. 
Thus, the depth of the whole circuit is $O(m)$. This circuit for the PARITY$(m)$ gate allows 
us to regard $R_m$ as a circuit consisting of the gates in $\cal G$. The circuit $R_m$ uses 
$O(m2^m)$ uninitialized ancillary qubits and its depth is $O(m^2)$ since it has 
$m$ stages and the depth of each stage is $O(m)$.
\end{proof}

\begin{figure}
\centering
\includegraphics[scale=.32]{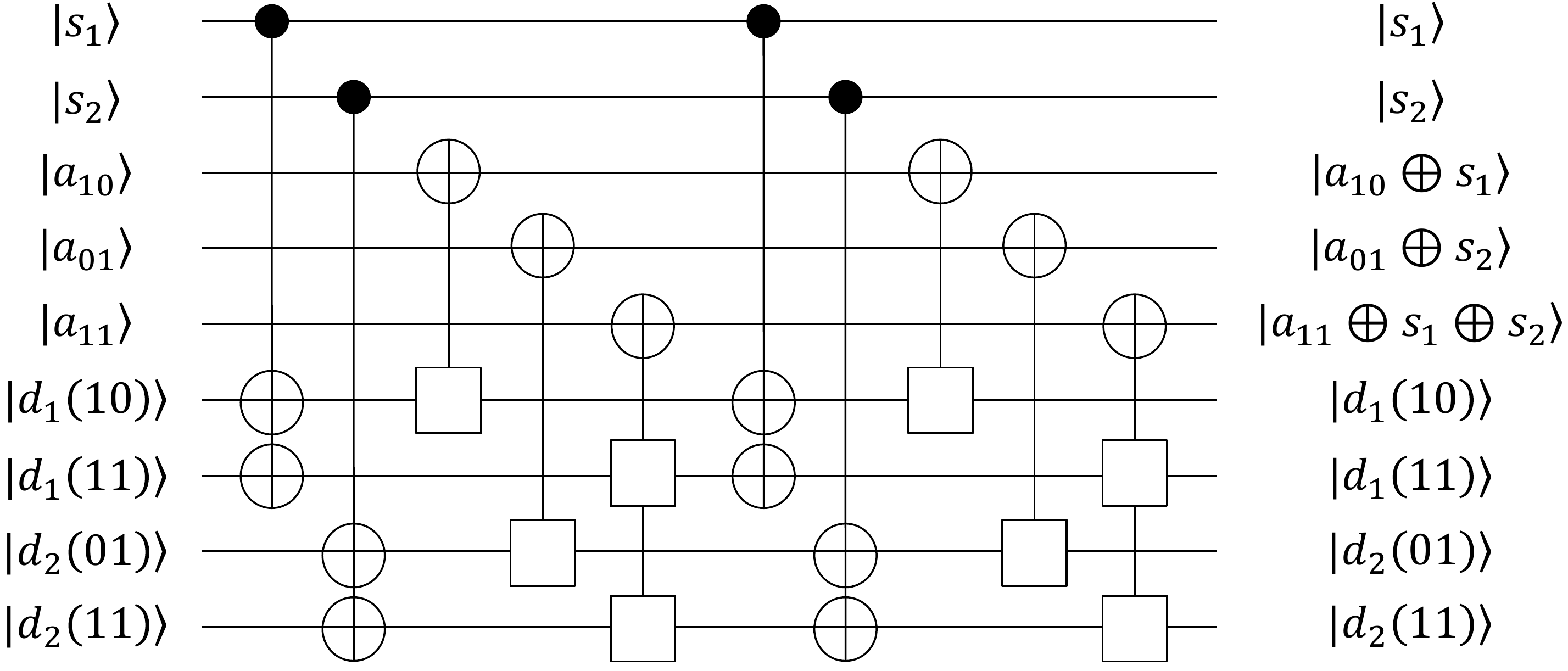}
\caption{Circuit for the PARITY$(2)$ gate. The gates other than the four 
fan-out gates are parity gates.}
\label{figure9}
\end{figure}

\subsection{Proof of Lemma~\ref{correct2}}\label{a-correct2}

\begin{proof}
As an example, $E_{2n}$ with $n=3$ is depicted in Fig.~\ref{figure6}. 
We assume that the initial states of $G(1),\ldots, G(n)$ are $|a_1\rangle,\ldots,|a_n\rangle$, 
respectively, for any (unknown) 
$a=a_1\cdots a_n \in \{0,1\}^n$. The states of $W_1,\ldots,W_n$ stay 
unchanged during the computation since the qubits are used only for control qubits. Thus, 
we consider the state of the remaining qubits $Y,G(1),\ldots,G(n),X_1,\ldots,X_n$. 
Steps 1--3 transform their initial state $|0\rangle |a\rangle |x\rangle$ into the state
\begin{align*}
\frac{1}{\sqrt{2}}|0\rangle|a_1\rangle\cdots|a_n\rangle(C_n|x\rangle) 
+ \frac{1}{\sqrt{2}}|1\rangle|a_1\oplus 1\rangle\cdots|a_n \oplus 1\rangle(C_n|x\rangle).
\end{align*}
Steps 4--5 transform this state into the state
\begin{align*}
& \frac{1}{\sqrt{2}}|0\rangle|a_1\rangle\cdots|a_n\rangle
\left\{C_n^\dag\left(\bigotimes_{j=1}^nZ_j^{a_jw_j}\right) C_n|x\rangle\right\} \\
& + \frac{1}{\sqrt{2}}|1\rangle|a_1\oplus 1\rangle\cdots|a_n \oplus 1\rangle 
\left\{C_n^\dag\left(\bigotimes_{j=1}^nZ_j^{(a_j\oplus 1)w_j}\right) C_n|x\rangle\right\}.
\end{align*}
Step 6 transforms this state into the state
\begin{align*}
& \frac{1}{2}|0\rangle|a\rangle
\left\{C_n^\dag\left(\bigotimes_{j=1}^nZ_j^{a_jw_j} + 
\bigotimes_{j=1}^nZ_j^{(a_j\oplus 1)w_j}\right) C_n|x\rangle\right\}\\
& + \frac{1}{2}|1\rangle|a\rangle
\left\{C_n^\dag\left(\bigotimes_{j=1}^nZ_j^{a_jw_j} - 
\bigotimes_{j=1}^nZ_j^{(a_j\oplus 1)w_j}\right) C_n|x\rangle\right\}.
\end{align*}
Thus, using the relationships $Z=Z^\dag$ and $Z^{a_jw_j+(a_j\oplus 1)w_j} = Z^{w_j}$ for any 
$a_j,w_j \in \{0,1\}$, we can compute the probability that the output of $E_{2n}$ is 0 as follows:
\begin{align*}
& \ \frac{1}{4}\langle x|C_n^\dag\left(\bigotimes_{j=1}^nZ_j^{a_jw_j} + 
\bigotimes_{j=1}^nZ_j^{(a_j\oplus 1)w_j}\right)^2 C_n|x\rangle \\
= & \ \frac{1}{2}\langle x|C_n^\dag\left(I + \bigotimes_{j=1}^nZ_j^{a_jw_j + (a_j\oplus 1)w_j} 
\right) C_n|x\rangle\\
= & \ \frac{1}{2}\langle x|C_n^\dag\left(I + \bigotimes_{j=1}^nZ_j^{w_j}\right) C_n|x\rangle
= \frac{1+ \langle x|C_n^\dag (\bigotimes_{j=1}^nZ_j^{w_j}) C_n|x\rangle}{2} 
= \frac{1+ F_n(x,w)}{2}.
\end{align*}
\end{proof}

\begin{figure}[t]
\centering
\includegraphics[scale=.32]{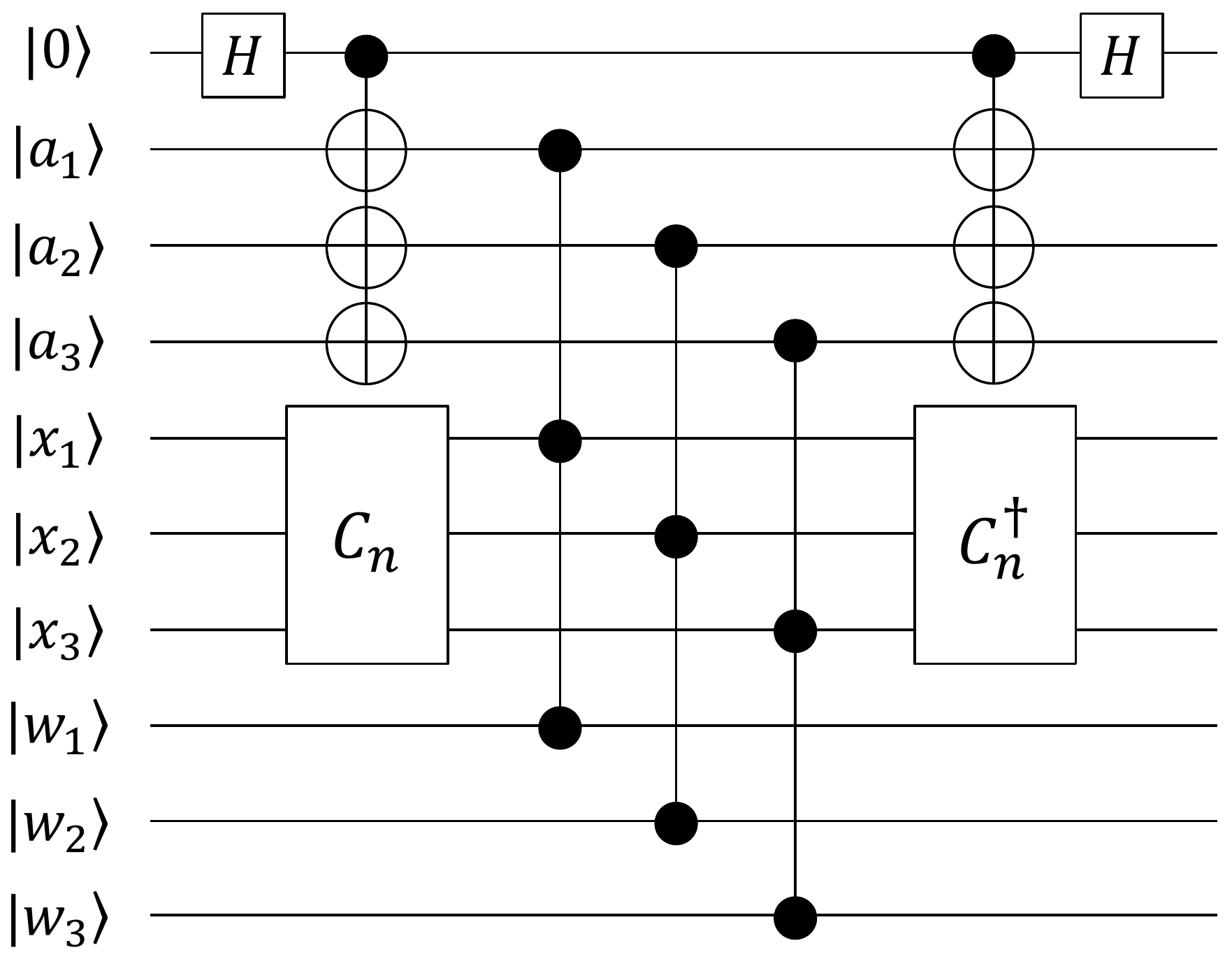}
\caption{Circuit $E_{2n}$ with $n=3$. The top qubit is the output qubit.}
\label{figure6}
\end{figure}

\subsection{Proof of Lemma~\ref{union}}\label{a-union}

\begin{proof}
Fix an arbitrary $\varepsilon >0$ and let $\varepsilon_1 = \varepsilon/k$. For any 
$x\in\{0,1\}^n$, $y\in\{0,1\}$, and $b\in\{0,1\}^{p+q}$, by the triangle 
inequality~\cite{Nielsen} and the definition of $V_l$, it holds that
\begin{align*}
\Delta(x,y,b) 
  = \ &
  ||C_n^{k+1}T_kV_k|x\circ y \circ b\rangle
- C_n^{k+1}C_n^k \cdots C_n^2C_n^1|x\circ y \circ b\rangle||\\
\leq \ &
||C_n^{k+1}T_kV_k|x\circ y \circ b\rangle 
-      C_n^{k+1}V_k|x\circ y \circ b\rangle||\\
& + ||C_n^{k+1}C_n^kT_{k-1}V_{k-1}|x\circ y \circ b\rangle
- C_n^{k+1}C_n^kV_{k-1}|x\circ y \circ b\rangle||\\
& + 
||C_n^{k+1}C_n^kC_n^{k-1}T_{k-2}V_{k-2}|x\circ y \circ b\rangle 
-      C_n^{k+1}C_n^kC_n^{k-1}V_{k-2}|x\circ y \circ b\rangle||\\
& \cdots \\
& + 
||C_n^{k+1}C_n^kC_n^{k-1}\cdots C_n^2 T_1 V_1|x\circ y \circ b\rangle 
-      C_n^{k+1}C_n^kC_n^{k-1}\cdots C_n^2 V_1|x\circ y \circ b\rangle||\\
 = \ & 
\sum_{l=1}^k\Delta_l(x,y,b).
\end{align*}
This implies that, for any $y\in\{0,1\}$ and $b\in\{0,1\}^{p+q}$,
\begin{equation}\label{eq2}
{\rm Pr}[\Delta(U_n,y,b) < k\varepsilon_1] \geq 
{\rm Pr}[\bigwedge_{l=1}^k\Delta_l(U_n,y,b) < \varepsilon_1].
\end{equation}
By the union bound and the Markov's inequality~\cite{Goldreich},
\begin{align*}
{\rm Pr}[\bigvee_{l=1}^k\Delta_l(U_n,y,b) \geq \varepsilon_1] & \leq 
\sum_{l=1}^k{\rm Pr}[\Delta_l(U_n,y,b) \geq \varepsilon_1] =
\sum_{l=1}^k {\rm Pr}[\Delta_l(U_n,y,b)^2 \geq \varepsilon_1^2] \\ 
& \leq
\frac{1}{\varepsilon_1^2}\sum_{l=1}^k{\rm E}[\Delta_l(U_n,y,b)^2],
\end{align*}
which implies that
\begin{align*}
{\rm Pr}[\bigwedge_{l=1}^k\Delta_l(U_n,y,b) < \varepsilon_1] \geq 
1-\frac{1}{\varepsilon_1^2}\sum_{l=1}^k{\rm E}[\Delta_l(U_n,y,b)^2].
\end{align*}
This relationship with (\ref{eq2}) implies that
\begin{align*}
{\rm Pr}[\Delta(U_n,y,b) < k\varepsilon_1] \geq 
1-\frac{1}{\varepsilon_1^2}\sum_{l=1}^k{\rm E}[\Delta_l(U_n,y,b)^2],
\end{align*}
which immediately implies the desired relationship since $\varepsilon_1 = \varepsilon/k$.
\end{proof}

\subsection{Proof of Lemma~\ref{bunsu}}\label{a-bunsu}

\begin{proof}
First, we show that, for any $1 \leq l \leq k$, $y\in\{0,1\}$, and $b\in\{0,1\}^{p+q}$,
\begin{align*}
{\rm E}[\Delta_l(U_n,y,b)^2] = \frac{4}{2^n} \sum_{j\in\{0,1\}^{u_l}} \sum_{x\in\{0,1\}^n}
|g_{x\circ y\circ b}^{(l)}(1^{t_l} \circ j)|^2.
\end{align*}
This can be shown by the following simple calculation. 
For any $1 \leq l \leq k$, $x\in\{0,1\}^n$, $y\in\{0,1\}$, and $b\in\{0,1\}^{p+q}$,
\begin{align*}
T_lV_l|x\circ y \circ b\rangle - V_l|x\circ y \circ b\rangle
= -2 \sum_{j\in\{0,1\}^{u_l}} g_{x\circ y\circ b}^{(l)}(1^{t_l} \circ j)|1^{t_l}\circ j\rangle.
\end{align*}
This implies that
\begin{align*}
\Delta_l(x,y,b)^2 = 4\sum_{j\in\{0,1\}^{u_l}} 
|g_{x\circ y\circ b}^{(l)}(1^{t_l} \circ j)|^2.
\end{align*}
Thus,
\begin{align*}
{\rm E}[\Delta_l(U_n,y,b)^2] = \frac{1}{2^n}\sum_{x\in\{0,1\}^n} \Delta_l(x,y,b)^2
= \frac{4}{2^n}\sum_{j\in\{0,1\}^{u_l}} \sum_{x\in\{0,1\}^n}
|g_{x\circ y\circ b}^{(l)}(1^{t_l} \circ j)|^2,
\end{align*}
which is the desired relationship.

Then, we show that, for any $1 \leq l \leq k$, $i\in\{0,1\}^{t_l}$, 
and $j\in\{0,1\}^{u_l}$,
\begin{align*}
\sum_{x\in\{0,1\}^n}\sum_{y\in\{0,1\}}\sum_{b \in\{0,1\}^{p+q}} 
|g_{x\circ y\circ b}^{(l)}(i \circ j)|^2 = 1.
\end{align*}
This can also be shown by the following simple calculation. For any $1 \leq l \leq k$, the 
unitary operation $V_l$ can be represented as
\begin{align*}
V_l & =  \sum_{x\in\{0,1\}^n}\sum_{y\in\{0,1\}}\sum_{b \in\{0,1\}^{p+q}}
V_l|x\circ y\circ b\rangle \langle x \circ y \circ b|
\\
& =  \sum_{x\in\{0,1\}^n}\sum_{y\in\{0,1\}}\sum_{b \in\{0,1\}^{p+q}}
\sum_{i\in\{0,1\}^{t_l}}\sum_{j\in\{0,1\}^{u_l}} g_{x\circ y\circ b}^{(l)}(i \circ j)|i\circ j\rangle
\langle x \circ y \circ b|.
\end{align*}
This implies that, for any $i\in\{0,1\}^{t_l}$ and $j\in\{0,1\}^{u_l}$,
\begin{align*}
V_l^\dag |i\circ j\rangle = \sum_{x\in\{0,1\}^n}\sum_{y\in\{0,1\}}\sum_{b \in\{0,1\}^{p+q}}
 g_{x\circ y\circ b}^{(l)}(i \circ j)^*
|x \circ y \circ b\rangle.
\end{align*}
A direct calculation shows that
\begin{align*}
1 = \langle i\circ j| V_l V_l^\dag |i\circ j\rangle = 
\sum_{x\in\{0,1\}^n}\sum_{y\in\{0,1\}}\sum_{b \in\{0,1\}^{p+q}} 
|g_{x\circ y\circ b}^{(l)}(i \circ j)|^2,
\end{align*}
which is the desired relationship.

The above relationships imply Lemma~\ref{bunsu} as follows. For any $y\in\{0,1\}$ 
and $b\in\{0,1\}^{p+q}$,
\begin{align*}
\sum_{l=1}^k {\rm E}[\Delta_l(U_n,y,b)^2] & = \frac{4}{2^n} \sum_{l=1}^k 
\sum_{j\in\{0,1\}^{u_l}} \sum_{x\in\{0,1\}^n}
|g_{x\circ y\circ b}^{(l)}(1^{t_l} \circ j)|^2\\
& \leq \frac{4}{2^n}\sum_{l=1}^k\sum_{y\in\{0,1\}}\sum_{b \in\{0,1\}^{p+q}} 
\sum_{j\in\{0,1\}^{u_l}}\sum_{x\in\{0,1\}^n}|g_{x\circ y\circ b}^{(l)}(1^{t_l} \circ j)|^2 \\
& =  \frac{4}{2^n}\sum_{l=1}^k \sum_{j\in\{0,1\}^{u_l}}1 =\frac{4}{2^n}
 \sum_{l=1}^k 2^{u_l} = \sum_{l=1}^k  \frac{2^{p+q+3}}{2^{t_l}}
 \leq \frac{k2^{p+q+3}}{2^{t_{\rm min}}},
 \end{align*}
which is the desired first relationship. In particular,
\begin{align*}
\frac{4}{2^n}\sum_{l=1}^k\sum_{b \in\{0,1\}^{p+q}} 
\sum_{j\in\{0,1\}^{u_l}}\sum_{x\in\{0,1\}^n}|g_{x\circ 0\circ b}^{(l)}(1^{t_l} \circ j)|^2
 \leq \frac{k2^{p+q+3}}{2^{t_{\rm min}}}.
\end{align*}
We consider the value
\begin{equation}\label{minimize}
\sum_{l=1}^k {\rm E}[\Delta_l(U_n,0,0^p\circ a)^2] =\frac{4}{2^n}\sum_{l=1}^k 
\sum_{j\in\{0,1\}^{u_l}}\sum_{x\in\{0,1\}^n}|g_{x\circ 0\circ 0^p\circ a}^{(l)}(1^{t_l} \circ j)|^2
\end{equation}
for any $a \in \{0,1\}^q$. There exists some $a'\in\{0,1\}^q$ such that it minimizes this 
value, i.e., value (\ref{minimize}) with $a'$ is less than or equal to that 
with any other $a \in \{0,1\}^q$. It holds that
\begin{align*}
2^q\sum_{l=1}^k {\rm E}[\Delta_l(U_n,0,0^p\circ a')^2] \leq 
\sum_{a \in \{0,1\}^q}\sum_{l=1}^k{\rm E}[\Delta_l(U_n,0,0^p\circ a)^2]
\leq \frac{k2^{p+q+3}}{2^{t_{\rm min}}},
\end{align*}
which immediately implies the desired second relationship.
\end{proof}

\end{document}